\numberwithin{equation}{section}
\newtheorem{theorem}{Theorem}
\newtheorem{prop}{Proposition}
\newtheorem{lemma}{Lemma}
\theoremstyle{plain}
\newcommand{\Su}{{\mathfrak{su}}}
\newcommand{\So}{{\mathfrak{so}}}
\begin{document}
\begin{titlepage}
\begin{center}

{\large \bf {Representation-theoretical characterization of canonical custodial symmetry in NHDM potentials}}

\vskip 1cm

R. Plantey\footnote{E-mail: Robin.Plantey@ntnu.no} 
 and
M. Aa. Solberg\footnote{E-mail: Marius.Solberg@ntnu.no} 

\vspace{1.0cm}

Department of Structural Engineering, \\ Norwegian University of Science and Technology, \\
Trondheim, Norway\\

\end{center}

\vskip 3cm

\begin{abstract}
By considering the basis-covariant constituents of $N$-Higgs-doublet potentials, we derive necessary and sufficient conditions for canonical $\mathsf{SO}(4)_C$ Custodial Symmetry (CS) of potentials with $N>2$ doublets, based on representation-theoretical and geometrical relations. In essence, our characterization relates the presence of canonical CS to basis-covariant vectors corresponding to particular bases of the defining representation of the orthogonal Lie algebras. For $N=3,4$ and $5$, the conditions demand little computational effort to be evaluated, and we provide practical algorithms that may be efficiently implemented in a computer program, for deciding whether or not a potential is 
custodial-symmetric.   
\end{abstract}

\end{titlepage}

\setcounter{footnote}{0}

\tableofcontents

%%%%%%%%%%%%%%%%%%%%%%%%%%%%%%%%%%%%%%%%%%%%%%%%
\section{Introduction}
\label{sect:intro}
%%%%%%%%%%%%%%%%%%%%%%%%%%%%%%%%%%%%%%%%%%%%%%%%

It is well known that extending the Standard Model (SM) with an arbitrary number of $\mathsf{SU}(2)_L$ doublets does not affect the value of the $\rho$ parameter
\begin{equation}
\rho = \frac{M_W^2}{M_Z^2\cos^2\theta_W} = 1
\end{equation}
at tree level, which is one of the reasons for the considerable attention that Multi-Higgs-Doublet Models (NHDMs) have received. 
The scalar potential of the SM has a related structural property, Custodial Symmetry (CS), which protects $\rho$ from large quantum corrections~\cite{Sikivie:1980hm}. CS is an accidental symmetry whereby the potential is invariant under the larger group $\mathsf{SO}(4)_C \simeq (\mathsf{SU}(2)_L\times \mathsf{SU}(2)_R)/\mathbb{Z}_2  \supset \mathsf{SU}(2)_L \times \mathsf{U}(1)_Y$. In the limit the hypercharge coupling $g'\to 0$ the kinetic terms are custodial-symmetric as well, and after spontaneous symmetry breaking $\mathsf{SO}(4)_C$ is broken down to custodial $\mathsf{SO}(3)_C$. Then the gauge bosons transform as a triplet under $\mathsf{SO}(3)_C$, and hence yields $m_W=m_Z$ and no electroweak mixing, to all orders of perturbation theory, when disregarding fermions. Due to the enhanced symmetry, approximate CS will keep $\rho$ near the experimentally measured magnitude, which is extremely close to one~\cite{Workman:2022ynf}.

Naturally, it is desirable to preserve these features in multi-Higgs doublet models. However, with more than one doublet, $\mathsf{SO}(4)_C$ is not an accidental symmetry of the potential anymore (and in addition, there are other possible symmetry breaking patterns, in contrast to the SM). Therefore one would like to identify the circumstances under which a NHDM potential is symmetric under $\mathsf{SO}(4)_C$. Nevertheless, this is a difficult task due to the basis freedom which can completely obfuscate a symmetry. In order to overcome basis freedom and identify $\mathsf{SO}(4)_C$ in a basis-independent way, we will characterize it using relations among basis-covariant objects, a powerful framework which has been successfully applied to other NHDM symmetries~\cite{Maniatis2008,Ivanov:2018ime,deMedeirosVarzielas:2019rrp,Plantey:2024yfm}.    

We will focus our attention on custodial transformations where $U_R\in \mathsf{SU}(2)_R$ acts as
\begin{equation}
\label{eq:canonical-CS}
\begin{pmatrix} i\sigma_2 \phi_i^* & \phi_i \end{pmatrix} \equiv B_{ii} \to B_{ii}U_R^\dag \, , \quad \forall i\in \{1,\ldots,N\}, 
\end{equation}
that is, has the same action on each bidoublet, in some doublet basis. There are, however, other inequivalent possibilities for CS~\cite{Battye:2011jj,Pilaftsis:2011ed,Darvishi:2019dbh}, e.g.~a 3HDM with $\mathsf{SU}(2)_R$ acting only on $B_{33}$, which are custodial in the sense that they may, with an appropriate symmetry breaking pattern, also protect $\rho$ from large corrections. However, the possible distinct $\mathsf{SU}(2)_R$ actions on the bidoublets will not be arbitrary~\cite{Gerard:2007kn}.  We do not explore these non-canonical possibilities here, and, unless otherwise specified, {\it from here on the term "custodial symmetry" will exclusively refer to canonical $\:\, \mathsf{SO}(4)_C$ custodial symmetry}, where the action of $\mathsf{SU}(2)_R$ is given by~\eqref{eq:canonical-CS} in some doublet basis. It was shown in~\cite{Nishi:2011gc} that for all custodial symmetries, where i) the Higgs kinetic term is left invariant, ii) $T_{3R}=\frac{1}{2}Y$ fixes $\mathsf{U}(1)_Y\subset \mathsf{SU}(2)_R$ and iii) $\mathsf{SU}(2)_R$ acts as $N$ copies of the defining representation, i.e.~as in \eqref{eq:canonical-CS} in some basis, the CS is equivalent to canonical $\mathsf{SO}(4)_C$, and the potential can be transformed into a characteristic form by a Higgs basis transformation. Thus, the problem of identifying canonical CS can be reduced to identifying this characteristic form of the potential. 
Different implementations of CS in the 2HDM have been introduced in~\cite{Pomarol:1993mu,Gerard:2007kn}, and were shown to be equivalent to canonical CS
in~\cite{Grzadkowski:2010dj,Haber:2010bw,Nishi:2011gc}. Different aspects of CS in models with
more than two doublets have been considered in~\cite{Olaussen:2010aq,Nishi:2011gc,Pilaftsis:2011ed,Solberg:2012au,Cen:2018wye,Solberg:2018aav}. With vacuum alignment in the direction of the CP-even fields, canonical CS in NHDMs will generate a mass degeneracy between charged and CP-odd sectors~\cite{Haber:2010bw,Olaussen:2010aq}. 
The present work is especially relevant for models with 3, 4 or 5 Higgs doublets.
In the 1970s, Weinberg presented a model with three doublets to accommodate spontaneous CP violation and natural flavour conservation~\cite{Weinberg:1976hu}. Since then, 
3HDMs have received significant attention. Models with four doublets have been considered in e.g.~\cite{Bjorken:1977vt,Kawase:2011az,Arroyo-Urena:2019lzv,Shao:2023oxt,PhysRevD.107.095001,Shao:2024ibu}, while 5HDMs 
in the context of higher-order CPs have been studied in~\cite{PhysRevD.98.015021}.  

While simple necessary and sufficient conditions for canonical CS can be formulated in the 2HDM in the bilinear formalism~\cite{Grzadkowski:2010dj}, the problem becomes more difficult with $N>2$ doublets. In this work we formulate general conditions for canonical $\mathsf{SO}(4)_C$ CS for a potential with any number of doublets. For $N=3$ doublets, our necessary and sufficient conditions are essentially the same as the known result where canonical CS is identified, in the adjoint space, by geometrical relations among the vectors which characterize the potential~\cite{Nishi:2006tg,Nishi:2011gc}. However, whereas these previous works used a combination of basis-invariants, we use covariant relations which, as we will see, generalize better and can be implemented in practical algorithms for testing whether a potential is custodial-symmetric. Indeed, we are able to devise practical procedures for detecting canonical CS in potentials with $N=4$ and $N=5$ doublets.

This paper is structured as follows. In Section~\ref{sect:method} we start by describing the covariants-based methods and proceed to derive a necessary and sufficient condition for canonical CS by making use of representation theory. Then, in Section~\ref{sect:conditions}, we show that our general condition can be implemented into practical algorithms for canonical CS detection in potentials with $N=3,4$ and $5$ doublets. Finally, our findings are summarized in Section~\ref{sec:Summary}. In Appendices~\ref{sec:SomeMathematicalResults} and~\ref{sec:EigenvalueDegeneraciesBeyondTheCharacteristic} we derive some auxiliary mathematical results and a method for handling the special case of potentials with large degeneracies, respectively.

%%%%%%%%%%%%%%%%%%%%%%%%%%%%%%%%%%%%%%%%%%%%%%%%%
\section{Method}
\label{sect:method}
%%%%%%%%%%%%%%%%%%%%%%%%%%%%%%%%%%%%%%%%%%%%%%%%%

This work relies on methods similar to those applied to order-2 $CP$ symmetry in~\cite{Plantey:2024yfm} where symmetries of the potential are characterized by representation-theoretical relations among basis-covariant objects. For completeness, and in order to set the notation, let us summarize this framework and recall important definitions.

We will write the potential for $N$ Higgs $\mathsf{SU}(2)_L$ doublets $\Phi_i$ in terms of gauge invariant bilinears 
\begin{equation}\label{E:bilinears}
K_0 = \Phi_i^\dag \Phi_i \;, \quad K_a = \Phi_i^\dagger (\lambda_a)_{ij} \Phi_j.
\end{equation}
where $K_a$, $a = 0, \ldots, N^2-1$ are given in terms of the generalized Gell-Mann matrices $\lambda_a$.
These matrices form a basis for the Lie algebra $\Su(N)$ and satisfy the commutation relations\footnote{In this basis the Killing form is proportional to the identity, hence we do not differentiate between upper and lower Lie algebra indices. Furthermore, we adopt the physicists' definition of a Lie algebra. For mathematicians a corresponding basis would be $\{i \lambda_j \}_{j=1}^{N^2-1}$.}
\begin{equation}
[\lambda_a, \lambda_b] = 2if_{abc}\lambda_c.
\end{equation}
For convenience, we order the generalized Gell-Mann matrices as in~\cite{Solberg:2018aav}, where the custodial-breaking bilinears appear first. That is
\begin{equation}
\label{eq:custorder}
\lambda_a^* = -\lambda_a \quad \text{for} \quad a=1,\ldots,k\equiv\frac{N(N-1)}{2}.
\end{equation}
Under a change of basis
\begin{equation}\label{E:HbasisCh}
\Phi_i \rightarrow U_{ij}\Phi_j\, , \quad  U\in \mathsf{SU}(N),
\end{equation}
it is readily seen that $K_0$ is a singlet while $K_a$ transforms under the adjoint representation
\begin{equation}
K_0 \rightarrow K_0 \,, \quad K_a\rightarrow R_{ab}(U)K_b
\end{equation}
with
\begin{equation}\label{E:R(U)}
R_{ab}(U) = \frac{1}{2}\text{Tr}(U^\dagger \lambda_a U \lambda_b).
\end{equation}
With these variables, the most general gauge invariant potential is then given by~\cite{Maniatis:2015gma}
\begin{equation}
\label{eq:Vbilinear}
V = M_0K_0 + M_aK_a + \Lambda_0 K_0^2 + L_a K_0K_a + \Lambda_{ab}K_aK_b
\end{equation}
and the coupling constants inherit from the bilinears simple transformation properties under a change of basis
\begin{align}
\label{E:LambdaTrafoU}
\Lambda &\to R(U) \Lambda R^T(U)\\
L &\to R(U) L \\
M &\to R(U) M 
\end{align}

Because basis transformations act on these couplings as the adjoint representation of $\mathsf{SU}(N)$, that is, the linear action of the group on its own Lie algebra, all the adjoint quantities which characterize the potential can be thought of as elements of $\Su(N)$. Thus, making use of this Lie algebra structure, it is possible to associate properties of the potential with representation theoretical relations inside of $\Su(N)$. Actually, as we will see in Section~\ref{sect:general}, a characteristic of CS is that a set of adjoint vectors forms a particular basis for the defining representation of $\So(N)$.

More formally, the mapping
\begin{align}
\Omega : \mathbb{R}^{N^2-1} &\rightarrow \Su(N)\\
\label{eq:isomorphism}
a &\mapsto a_i \lambda_i. 
\end{align}
defines an isomorphism between $\Su(N)$ and $\mathbb{R}^{N^2-1}$ when the latter is equipped with the product
\begin{align}
F : \mathbb{R}^{N^2-1}\times \mathbb{R}^{N^2-1} &\rightarrow \mathbb{R}^{N^2-1}\\
(a,b) &\mapsto f_{ijk}a_i b_j \equiv F_k^{(a,b)}
\end{align}
where $f_{ijk}$ are the structure constants of $\Su(N)$ in the Gell-Mann basis. Following the nomenclature of~\cite{deMedeirosVarzielas:2019rrp}, where it was used to identify 3HDM symmetries, we will refer to $F$ as the F-product. In what follows we will denote vectors of $\mathbb{R}^{N^2-1}$ with lower case letters and the associated $\Su(N)$ matrices by uppercase letters e.g.~$A \equiv a_i\lambda_i$. With these definitions, one has the following correspondence between commutators in $\Su(N)$ and F-products in $\mathbb{R}^{N^2-1}$
\begin{equation}
\label{eq:Fprod}
F^{(a,b)} = c \; \Leftrightarrow \; [A,B] = 2iC. 
\end{equation}
It is important to note that F-product relations are preserved by a change of Higgs basis $U$ i.e.
\begin{align}\label{E:FprodTransf}
 F^{(a,b)} = c \; \Leftrightarrow \; F^{(a',b')} = c',
\end{align}
where $x'=R(U)x$, cf.~\eqref{E:R(U)}, as is easily seen by considering the corresponding commutation relations.

%%%%%%%%%%%%%%%%%%%%%%%%%%%%%%%%%%%%%%%%%%%%%%%%%
\subsection{The custodial-symmetric potential}
\label{sec:cs-pot}
%%%%%%%%%%%%%%%%%%%%%%%%%%%%%%%%%%%%%%%%%%%%%%%%%

With the bilinears custodially ordered as in~(\ref{eq:custorder}), the potential is custodial-symmetric if and only if there exists a basis where $\Lambda$ assumes a block-diagonal form~\cite{Nishi:2011gc}
\begin{equation}
\label{eq:manifestCS}
\Lambda_C = 
\begin{pmatrix}
C_N & \mathbf{0}_{}\\
\mathbf{0} & A_N
\end{pmatrix}
\end{equation}
where $A_N$ is an arbitrary, real and symmetric $N^2-1-k\times N^2-1-k$ matrix and $C_N$ is a $k\times k$ matrix which we will refer to as the custodial block. For $N\leq 3$, the custodial block consists only of zeroes, corresponding to the absence of terms of the form 
\begin{equation}\label{eq:CC}
\widehat{C}_{ij}\widehat{C}_{kl} \equiv \text{Im}(\Phi_i^\dagger\Phi_j)\text{Im}(\Phi_k^\dagger\Phi_l)
\end{equation}
in $V$, however with more than three doublets additional custodial-invariant terms can be constructed~\cite{Nishi:2011gc} resulting in a non-zero custodial block (cf.~Section~\ref{sect:conditions} for explicit expressions of $C_N$). Note that CS imposes stronger constraints on the NHDM potential than order-2 CP symmetry which corresponds to the block structure~\eqref{eq:manifestCS} without any restrictions on the upper block~\cite{Nishi:2006tg,Nishi:2011gc,Plantey:2024yfm}.

The matrix $\Lambda$ in~(\ref{eq:Vbilinear}), being real and symmetric, can be written in terms of its eigenvalues and orthonormal set of eigenvectors, a form known as its spectral decomposition. This can always be done, even if the rotation that diagonalizes $\Lambda$ is not in $\text{Adj}(\mathsf{SU}(N)) \subset \mathsf{SO}(N^2-1)$. Let us therefore expand $\Lambda_C$ in terms of its eigenvalues and eigenvectors in a basis where the CS is manifest
\begin{equation}
\label{eq:spectraldec}
\Lambda_C = \sum_{a=1}^k \beta_a t_a t_a^T + \sum_{b=1}^{N^2-1-k} \gamma_b q_b q_b^T.
\end{equation}
with 
\begin{equation}
\label{eq:beta-t-def}
\sum_{a=1}^k \beta_a t_a t_a^T \equiv \begin{pmatrix}
C_N & \mathbf{0}_{}\\
\mathbf{0} & \mathbf{0}
\end{pmatrix},\quad
\sum_{b=1}^{N^2-1-k} \gamma_b q_b q_b^T \equiv \begin{pmatrix}
\mathbf{0} & \mathbf{0}_{}\\
\mathbf{0} & A_N
\end{pmatrix}.
\end{equation}
These important relations define the eigenvalues and eigenvectors, $\beta_a$ and $t_a$, which are used extensively in the remainder of the text. From~\eqref{eq:beta-t-def}, it can be seen that $\text{Span}(t_1,\ldots,t_k)=\text{Span}(e_1,\ldots,e_k)$ which, through the isomorphism~\eqref{eq:isomorphism}, corresponds to the (image of the) defining representation of $\So(N)$ within $\Su(N)$. 

On the other hand, the part of the potential~(\ref{eq:Vbilinear}) that is linear in the bilinears $K_a$ is determined by two adjoint vectors, $L$ and $M$. For a custodial-symmetric potential in a basis where the symmetry is manifest, the absence of custodial breaking terms implies
\begin{equation}
L\cdot t_a=M\cdot t_a=0, \quad \forall a\in {1,\ldots,k}. 
\end{equation}
We will use the same concise nomenclature as in~\cite{Plantey:2024yfm} and will refer to these conditions as $LM$-orthogonality.

Let us now take a closer look at the custodial block $C_N$ which, as we will see, determines for each $N$ the particular form of the conditions for CS. The bilinears $\widehat{C}$ from~\eqref{eq:CC} will in general break CS, but the combination
\begin{align}
 I^{(4)}_{abcd}=\widehat{C}_{ab}\widehat{C}_{cd}+\widehat{C}_{ad}\widehat{C}_{bc}+\widehat{C}_{ac}\widehat{C}_{db},
\end{align}
with $1\leq a,b,c,d \leq N$, will be invariant under CS, as shown by Nishi in~\cite{Nishi:2011gc}. $I^{(4)}$ is totally antisymmetric in all of its indices, and hence $I^{(4)}$ is zero if two indices are identical, so these terms will vanish in the 3HDM.
The most general, manifestly custodial-symmetric terms quadratic in the bilinears $\widehat{C}$ may then be written
\begin{align}
 V_{\widehat{C}^2}=\lambda_{abcd}I^{(4)}_{abcd},
\end{align}
with summation over repeated indices, and where we may (and will) take $a<b<c<d$ in the sum.

The custodial block $C_N$ will then be given by
\begin{align}
\label{eq:csblock-explicit}
 (C_N)_{ij}=\frac{1}{2} \frac{\partial^2  V_{\widehat{C}^2}}{\partial \widehat{C}_{m(i)n(i)} \partial \widehat{C}_{m(j)n(j)}},
\end{align}
 where $1 \leq i,j \leq k =N(N-1)/2$ and $(m(i),n(i))$ is a bijection between
the $k$ integers $i$ and the $k$ pairs $(m,n)$ with $1\leq m<n \leq N$. We will apply the
bijection which gives us the lexicographic order
\begin{align}
\{\widehat{C}_i\}_{i=1}^k=\{\widehat{C}_{12},\widehat{C}_{13},\ldots,\widehat{C}_{1 N},\widehat{C}_{23},\widehat{C}_{24},\ldots, \widehat{C}_{N-1,N}\},
\end{align} 
consistent with the order of the generalized Gell-Mann matrices referred to in Section~\ref{sect:method}, cf.~\cite{Solberg:2018aav}.

A careful inspection of~(\ref{eq:csblock-explicit}) reveals that the matrix structure of $C_N$ follows a fairly simple pattern when the number of doublets increases. For $N=4$, the smallest number of doublets with $V_{\widehat{C}^2} \neq 0$, the custodial block has an anti-diagonal structure
\begin{equation}
\label{eq:CSblock4-1}
C_4 = \lambda_{1234}\begin{pmatrix}
0&0&0&0&0&1 \\
0&0&0&0&-1&0 \\
0&0&0&1&0&0 \\
0&0&1&0&0&0 \\
0&-1&0&0&0&0 \\
1&0&0&0&0&0 \\
\end{pmatrix}
\end{equation}
while for $N>4$, the same anti-diagonal structure repeats once for all $\small \binom{N}{4}$ possible subsets of 4 distinct indices i.e.
\begin{equation}
\label{eq:CSblockN}
C_N = \sum_{a<b<c<d} \lambda_{abcd} D^{(abcd)}_N
\end{equation}
where $D^{(abcd)}_N$ is a $k\times k$ matrix which is zero everywhere except in the $6\times 6$ 
sub-block consisting of row and column numbers~$\big(i(a,b),i(a,c),i(a,d),i(b,c),i(b,d),i(c,d)\big)$, with $i(a,b)$ the lexicographic ordering bijection, where each sub-block has the anti-diagonal structure~(\ref{eq:CSblock4-1}).

%%%%%%%%%%%%%%%%%%%%%%%%%%%%%%%%%%%%%%%%%%%%%%%%%
\subsection{Representation and embedding indices}
\label{sect:indices}
%%%%%%%%%%%%%%%%%%%%%%%%%%%%%%%%%%%%%%%%%%%%%%%%%

Before deriving the representation-theoretical relations which characterize CS, let us recall some notions of Lie algebra theory related to the identification of representations. In $\Su(N)$ and $\So(N)$, one can define an inner product with
\begin{align}
\label{E:innerprod-su}
\langle X,Y \rangle_{\Su(N)} &\equiv \frac{1}{2}\text{Tr}(XY) = \frac{1}{4N}\text{Tr}(\text{ad}_X \text{ad}_Y), &&\forall X,Y\in \Su(N)\\
\label{E:innerprod-so}
\langle X,Y \rangle_{\So(N)} &\equiv \frac{1}{4}\text{Tr}(XY), &&\forall X,Y\in \So(N),\, N\geq4
\end{align}
where the numerical factors in front of the traces ensure a consistent normalization of the roots of both Lie algebras~\cite{Dynkin:1957um,Wang1985OnNH}. We have here chosen the inner products such that normalization is conserved by the mapping~\eqref{eq:isomorphism}, which infer long roots of the Lie algebra are normalized as well.  
  
Similarly, an inner product for a representation $\phi: \mathfrak{g} \rightarrow \mathfrak{gl}(n, \mathbb{C})$ can be defined by
\begin{align}
\langle \phi(X),\phi(Y) \rangle \equiv \frac{1}{2}\text{Tr}(\phi(X)\phi(Y)).
\end{align}

Having properly defined inner products, one can compute the so-called representation index of $\phi$
\begin{align}
\label{E:rep-index}
I_\phi \equiv \frac{\langle \phi(X),\phi(Y) \rangle}{\langle X,Y \rangle_\mathfrak{g}},
\end{align}
sometimes called Dynkin index, which is independent of $X,Y$ and can be used to characterize a representation~\cite{mckay1981tables}. From the definitions~(\ref{E:innerprod-su}),~(\ref{E:innerprod-so}) and~(\ref{E:rep-index}) it can be seen for example that the defining and adjoint representations of $\Su(N)$ have index $1$ and $2N$, respectively, while the defining representation of $\So(N)$, $\mathbf{N}$, has index\footnote{Recall that the defining representation of $\So(3)$ is equivalent to the adjoint representation of $\Su(2)$.}
\begin{align}
\label{E:index-so}
I_\mathbf{N} \equiv
\begin{cases}
4\,, & N=3\\
2\,, & N>3
\end{cases}
.
\end{align}
In what follows, we will consider subalgebras of $\Su(N)$ and their embeddings into the defining representation of $\Su(N)$. An embedding of a subalgebra $\mathfrak{h}$ is a faithful Lie algebra homo\-morphism~$p: \mathfrak{h} \rightarrow \Su(N)$ and inequivalent embeddings are characterized by the so-called embedding index
\begin{align}
\label{E:embindex}
J_p = \frac{\langle p(X),p(Y)\rangle_{\Su(N)}}{\langle X,Y \rangle_{\mathfrak{h}}}\,, \quad\quad \forall X,Y\in \mathfrak{h}.
\end{align}
Given such a subalgebra embedding and a representation $\phi$ of $\Su(N)$, the composition $\phi\circ p$ furnishes a representation of $\mathfrak{h}$. The representation and embedding indices are related by~\cite{Dynkin:1957um}
\begin{align}
J_p = \frac{I_{\phi p}}{I_\phi}.
\end{align}
In particular, if $\phi$ is the defining representation of $\Su(N)$ then $J_p = I_{\phi p}$. If $\phi\circ p$ is reducible, the index will be the sum of the indices of the irreducible components.

As an example, let us consider the embeddings of $\So(N)$ into $\Su(N)$ which are of special interest in this work. Consider a normalized basis of $\So(N)$, $\{X_a\}_{a=1}^k$, satisfying commutation relations
\begin{align}
[X_a,X_b] \equiv 2ig_{abc} X_c.
\end{align}
An embedding $p$ into $\Su(N)$ naturally preserves these commutation relations, but it may not preserve the normalization of the basis elements. Indeed, according to~(\ref{E:embindex}), the embedded basis elements $\{p(X_a)\}_{a=1}^k$ have norm $\sqrt{J_p}$ in $\Su(N)$. Hence the normalized embedded basis $\{p(\bar X_a) \equiv p\big(\frac{X_a}{\sqrt{J_p}}\big)\}_{a=1}^k$ satisfies the commutation relations
\begin{align}
\label{E:norm-comm}
\sqrt{J_p}\, [p(\bar X_a), p(\bar X_b)] = 2ig_{abc} p(\bar X_c).
\end{align}
The point is that if one has found a subalgebra, e.g.~$\So(N)$ in $\Su(N)$, then information about the embedding and representation can be extracted by consistently normalizing a basis of the subalgebra since this makes the embedding index apparent. Particularly relevant to this work is the embedding of the defining representation of $\So(N)$ into $\Su(N)$ furnished by the antisymmetric Gell-Mann matrices $\{\lambda_a\}_{a=1}^k$. In that case the index of the relevant embedding, $J_p$, equals the index of the defining representation of $\So(N)$, $I_N$, given in~(\ref{E:index-so}) and so~(\ref{E:norm-comm}) yields the following relation between the structure constants of $\So(N)$ and $\Su(N)$ in the Gell-Mann basis, $g_{abc}$ and $f_{abc}$
\begin{align}
g_{abc} = \sqrt{I_N}f_{abc}\, , \quad a,b,c=1,\ldots,k.
\end{align}

%%%%%%%%%%%%%%%%%%%%%%%%%%%%%%%%%%%%%%%%%%%%%%%%%
\subsection{A general necessary and sufficient condition for custodial symmetry}
\label{sect:general}
%%%%%%%%%%%%%%%%%%%%%%%%%%%%%%%%%%%%%%%%%%%%%%%%%

From the spectral decomposition~(\ref{eq:spectraldec}) we can deduce a basis-invariant signature of CS, namely, in the presence of CS, $\Lambda$ has $k$ $LM$-orthogonal eigenvectors $t_a$ with special eigenvalues $\beta_a$, spanning the subspace $\text{Span}(e_1,\ldots,e_k)$ in some Higgs basis. The eigenvalues $\beta_a$ and the eigenvectors' components $(t_a)_b$ depend on the number of doublets and can be calculated by considering the most general custodial-symmetric NHDM in a basis where the symmetry is manifest (cf.~Section~\ref{sect:conditions}).

Now $\text{Span}(e_1,\ldots,e_k)$ is isomorphic to the defining representation of $\So(N)$, which means that the matrices 
\begin{equation}\label{E:tTcorr}
T_a = (t_a)_b \lambda_b
\end{equation}
form a basis for the defining representation of $\So(N)$. Depending on the components of $t_a$, their commutation relations can be different from the usual $\So(N)$ commutation relations $\sqrt{I_N}[\lambda_a, \lambda_b] = 2ig_{abc}\lambda_c$ and in general we will have 
\begin{equation}
\label{eq:comm1}
\sqrt{I_N}[T_a,T_b] = 2ig'_{abc}T_c.
\end{equation}
where $I_N$ is the representation index of the defining representation of $\So(N)$, $g'_{abc} = \sqrt{I_N}t_{ad}t_{be}t_{cf}f_{def}$ and we abbreviate the components $(t_a)_b \equiv t_{ab}$ from now on. Equivalently, in $\mathbb{R}^{N^2-1}$ we have, according to~(\ref{eq:Fprod}), the F-product relations
\begin{equation}
\label{eq:Fprod1}
\sqrt{I_N} F^{(t_a, t_b)} = g'_{abc}t_c.
\end{equation}
This property, being a vector relation, can be verified in any Higgs basis, cf.~\eqref{E:FprodTransf}. Indeed, under a basis change $U \in \mathsf{SU}(N)$
\begin{equation}
T_a \rightarrow UT_aU^\dagger \equiv V_a \; \Leftrightarrow \; t_{ab} \rightarrow R(U)_{bc}t_{ac} \equiv v_{ab},
\end{equation}
and the commutation and F-product relations take the same form as~(\ref{eq:comm1}) and~(\ref{eq:Fprod1}).

We note that this characterization of CS is the same as that of $CP2$ symmetry derived in~\cite{Plantey:2024yfm} strengthened with restrictions on the eigenvalues and the F-product relations of the $LM$-orthogonal eigenvectors which span the defining representation of $\So(N)$. While it would be possible to detect CS by first establishing CP2 and then checking if the eigenvalues and F-product relations are consistent with CS, we will now show that a much simpler procedure, based on embedding indices, can be devised.

To prove that the conditions given above are also sufficient we will make use of a result proved in~\cite{Plantey:2024yfm}, namely, inside $\Su(N)$, there are no $\So(N)$ subalgebras apart from the defining representation\footnote{Both representations and subalgebras of Lie algebras are defined as Lie algebra
homomorphisms. The only distinction between the two concepts is that a Lie subalgebra
always corresponds to an injective (one-to-one) homomorphism whose image lies within the ambient algebra, whereas representations do not in general respect this restriction.}, except for $N=3,4,5,6,8$ for which the alternative $\So(N)$ subalgebras are listed in Table~\ref{tab:alternative-indices}. We also include the embedding indices of these subalgebras in $\Su(N)$, calculated with \texttt{LieART}~\cite{Feger:2019tvk}.

\begin{table}[H]
\centering
\begin{tabular}{|l|l|l|}
\hline
Dimension & Representation & Index\\
\hline
$N=3$ & $\mathbf{2}+\mathbf{1}$ & 1 \\
\hline
$N=4$ & $\mathbf{2}+\mathbf{2'} $ & 1 \\
\hline
$N=5$ & $\mathbf{4}+\mathbf{1}$ & 1 \\
\hline
$N=6$ & $\mathbf{4}+\mathbf{1}+\mathbf{1}$ & 1 \\
    & $\mathbf{\bar{4}}+\mathbf{1}+\mathbf{1}$ & 1 \\
\hline
$N=8$ & $\mathbf{{8}_s}$ & 2\\
    & $\mathbf{{8}_c}$ & 2\\
\hline
\end{tabular}
\caption{$\So(N)$ subalgebras different from the defining representation and their embedding indices in $\Su(N)$. For reference, the defining representation has index $2$ for $N>3$ and index $4$ for $N=3$, cf.~\eqref{E:index-so}.}
\label{tab:alternative-indices}
\end{table}

We can now state and prove the sufficiency of our general condition which links CS to special bases of the defining representation of $\mathfrak{so}(N)$.
\begin{theorem}
\label{thm:cs}
Let $N>2$. Then an NHDM potential is custodial-symmetric if and only if the matrix $\Lambda$ has $k=N(N-1)/2 \;$ $LM$-orthogonal normalized eigenvectors $v_a$, with the same eigenvalues and F-product relations as the normalized eigenvectors $t_a$ of some instance of the custodial block $C_N$.
\begin{proof} %\hfill
$\pmb{(\Leftarrow):}$  The linear mapping $p$ given by $p(T_a)=V_a$ is a Lie algebra homomorphism of $\So(N)$ into $\Su(N)$: It respects the commutator, in the sense $p([T_a,T_b])=[p(T_a),p(T_b)]$, since, by assumption, also the normalized eigenvectors $v_a$ satisfy the F-product relations~\eqref{eq:Fprod1},
i.e.
\begin{align}
	\sqrt{I_N} F^{(v_a, v_b)} = g'_{abc}v_c,
\end{align}
where $g'_{abc}$ are known numbers.\footnote{Here $p$ may be extended beyond $\So(N)$ by linearity. Hence $p([T_a,T_b])=-ip(i[T_a,T_b])$, since $[T_a,T_b]$ strictly speaking is not an element of $\So(N)$, if we insist on applying the physicists' definition of a Lie algebra.}
 Moreover, it is faithful due to Proposition~\ref{prop:faithful} in Appendix \ref{sec:SomeMathematicalResults}, so $p$ is a subalgebra embedding of $\So(N)$. This subalgebra will correspond to the defining representation, since the embedding index $I_N$ is unique for the defining representation.
Indeed, if $N=3,4,5,6$ then the other possible $\So(N)$ subalgebras have embedding index (cf.~Table~\ref{tab:alternative-indices}) different from that of the defining representation~\eqref{E:index-so}. In the very special case $N=8$ $\So(8)$ has three inequivalent representations with the same embedding index\footnote{This is a consequence of triality, a peculiar feature only present in $\So(8)$ which originates in the exceptionally large symmetry of the $D_4$ root system~\cite{mckay1981tables,zee2016group}.} in $\Su(8)$. However, the images of the representations $\mathbf{{8}_s}$ and $\mathbf{{8}_c}$ are the same
as for the defining representation $\mathbf{{8}_v}$, so a detection of any of these representations will correspond to the matrices of $\mathbf{{8}_v}$.\footnote{I.e.~which of these representations of $\So(8)$ that are detected in our context will just be a matter of convention. The authors are grateful to Andreas Trautner for pointing out this fact.}  

In all other cases, the defining representation is the only $\So(N)$ subalgebra according to Proposition~2 in~\cite{Plantey:2024yfm} quoted in Table~\ref{tab:alternative-indices} above. The representation generated by $\{V_a\}$ is therefore equivalent to the $\{T_a\}$ representation and moreover, using Proposition~\ref{prop:unitary-equivRedIrred} from the Appendix, the equivalence is provided by a unitary matrix $U$ and thus can be achieved by a change of Higgs basis. 
Hence $T_a=U V_a U^\dag$ and $t_a=R(U)v_a$, and writing the rotated $\Lambda$ using its spectral decomposition, we get $\Lambda' = R\Lambda R^T=\beta_a  Rv_a v_a^T R^T = \beta_a t_a t_a^T $ where $a$ is summed up to $N^2-1$ and only the first $k$ eigenvectors are relevant for the custodial structure. Finally, since $\{t_a\}_{a=1}^k$ and $\{\beta_a\}_{a=1}^k$ corresponded to an instance of the custodial block $C_N$, 
$\Lambda'$ is manifestly custodial-symmetric.  
 Hence there are no custodial-breaking terms quadratic in the bilinears $K_a$. Moreover, since $v_a\cdot L = v_a\cdot M = 0$ for $a\leq k$, we have in the primed basis $L,M \in \text{Span}(e_{k+1},\ldots,e_{N^2-1})$ and there are no custodial-breaking terms linear in $K_a$ either. Therefore the potential is custodial-symmetric.  

$\pmb{(\Rightarrow):}$ This follows from the arguments given at the beginning of this section.
\end{proof}
\end{theorem}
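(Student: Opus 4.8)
The plan is to prove the theorem in two directions, exploiting the Lie-algebraic machinery set up in Sections~\ref{sect:indices} and~\ref{sect:general}. The forward direction $(\Rightarrow)$ should be essentially immediate: if the potential is custodial-symmetric, then by~\cite{Nishi:2011gc} there is a Higgs basis in which $\Lambda$ takes the block form~\eqref{eq:manifestCS}, and the spectral decomposition~\eqref{eq:spectraldec}–\eqref{eq:beta-t-def} shows directly that the first $k$ eigenvectors $t_a$ span $\text{Span}(e_1,\ldots,e_k)$, have eigenvalues $\beta_a$ determined by some instance of $C_N$, and — because the $\lambda_a$ with $a\leq k$ close as the defining $\So(N)$ — satisfy the F-product relations~\eqref{eq:Fprod1}. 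The $LM$-orthogonality $L\cdot t_a = M\cdot t_a = 0$ follows from the absence of custodial-breaking linear terms. Since all these objects transform covariantly under $\mathsf{SU}(N)$ (cf.~\eqref{E:FprodTransf}), the existence of such $v_a$ in an arbitrary basis is equivalent to the existence of the $t_a$ in the manifest-CS basis, so this direction reduces to citing the results already recalled at the start of Section~\ref{sect:general}.

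The substance is in the converse $(\Leftarrow)$. Assume $\Lambda$ has $k$ normalized $LM$-orthogonal eigenvectors $v_a$ with the prescribed eigenvalues and F-product relations. The strategy is: (i) define the linear map $p$ by $p(T_a)=V_a$ on a normalized basis of the defining $\So(N)$ and extend linearly; (ii) check $p$ is a Lie algebra homomorphism — this is exactly the statement that the $v_a$ reproduce the $g'_{abc}$ of~\eqref{eq:Fprod1}, which is the hypothesis; (iii) check $p$ is faithful, invoking the injectivity result (Proposition~\ref{prop:faithful}); (iv) conclude $p$ is a subalgebra embedding of $\So(N)$ into $\Su(N)$ and identify which one. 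For step (iv) one uses the embedding index: by~\eqref{E:embindex} the images $p(X_a)$ carry the index $I_N$ of~\eqref{E:index-so}, and by the classification (Proposition~2 of~\cite{Plantey:2024yfm}, Table~\ref{tab:alternative-indices}) the only $\So(N)$ subalgebra of $\Su(N)$ with that index is the defining representation — with the sole subtlety at $N=8$, where triality produces $\mathbf{8}_s,\mathbf{8}_c$ sharing the index, but these have the same image as $\mathbf{8}_v$, so this causes no real ambiguity. Then (v): since both $\{T_a\}$ and $\{V_a\}$ generate the defining representation, they are equivalent, and by the unitary-equivalence result (Proposition~\ref{prop:unitary-equivRedIrred}) the intertwiner can be taken unitary, i.e.~realized as a Higgs basis change $U$ with $T_a=UV_aU^\dagger$, equivalently $t_a=R(U)v_a$. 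Finally (vi): rotate $\Lambda$ by $R(U)$; its spectral decomposition becomes $\Lambda' = \sum_a \beta_a t_a t_a^T$, which by construction of the $\beta_a,t_a$ is exactly an instance of the custodial block $C_N$ in the upper-left corner, hence manifestly of the form~\eqref{eq:manifestCS}; and the $LM$-orthogonality transported to the primed basis gives $L,M\in\text{Span}(e_{k+1},\ldots,e_{N^2-1})$, killing the linear custodial-breaking terms. By~\cite{Nishi:2011gc} this manifest form is equivalent to canonical CS.

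I expect step (iv), the index argument, to be the conceptual crux: it is where the heavy lifting of the paper — the classification of $\So(N)$ subalgebras in $\Su(N)$ and the invariance of the embedding index — is actually cashed in, and it is what makes the criterion \emph{sufficient} rather than merely necessary. The faithfulness in step (iii) is a genuine but routine point (one must rule out that the $V_a$ become linearly dependent or that $p$ has a kernel), handled by the appendix proposition. The only place requiring care is the $N=8$ triality caveat, which should be dispatched with a sentence noting that the three candidate representations share their image in $\Su(8)$, so the distinction is conventional and irrelevant for detecting the custodial structure of $\Lambda$. The remaining steps (v) and (vi) are bookkeeping: translating the abstract equivalence of representations into a concrete Higgs basis transformation and reading off the block structure from the spectral decomposition.
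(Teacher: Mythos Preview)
Your proposal is correct and follows essentially the same approach as the paper's own proof: the same six steps (define $p(T_a)=V_a$; homomorphism from the F-product hypothesis; faithfulness via Proposition~\ref{prop:faithful}; identification of the defining representation by the embedding index and Table~\ref{tab:alternative-indices}, with the $N=8$ triality caveat; unitary equivalence via Proposition~\ref{prop:unitary-equivRedIrred}; rotation of $\Lambda$ and $LM$-orthogonality to recover the manifest form~\eqref{eq:manifestCS}) appear in the same order and with the same justifications. Your assessment that step~(iv) is the conceptual crux is also in line with how the paper presents it.
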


%%%%%%%%%%%%%%%%%%%%%%%%%%%%%%%%%%%%%%%%%%%%%%%%%
\section{Conditions for custodial symmetry}
\label{sect:conditions}
%%%%%%%%%%%%%%%%%%%%%%%%%%%%%%%%%%%%%%%%%%%%%%%%%

In this section we show how CS can be detected in practice, starting with the known case of the 3HDM~\cite{Nishi:2011gc} and then moving on to the 4HDM and 5HDM. For these models, the eigenvectors of the custodial block (which is described in Section~\ref{sec:cs-pot}) take a simple form and all the Lie algebra bases of the defining $\So(N)$ representation which correspond to CS can be identified, allowing for a practical application of Theorem~\ref{thm:cs}. The concrete algorithms which we introduce below can be implemented numerically to decide if a parameter space point of a potential is custodial-symmetric, although analytical implementations are possible for sufficiently simple potentials. In the latter case, the existence of CS can be established at once for all possible values of the parameters.

%%%%%%%%%%%%%%%%%%%%%%%%%%%%%%%%%%%%%%%%%%%%%%%%%
\subsection{$N=3$}
\label{sec:N3}
%%%%%%%%%%%%%%%%%%%%%%%%%%%%%%%%%%%%%%%%%%%%%%%%%

For the 3HDM the custodial block consists only of zeroes
\begin{equation}
C_3 = 
\begin{pmatrix}
0&0&0\\
0&0&0\\
0&0&0
\end{pmatrix}
\end{equation}
and hence the eigenvectors and eigenvalues for the custodial block in the spectral decomposition of $\Lambda_C$~(\ref{eq:spectraldec}) are simply given by
\begin{equation}
t_{ai} = \delta_{ai}, \quad \beta_a = 0,\quad a=1,2,3.
\end{equation}
These normalized vectors satisfy the F-product relations
\begin{equation}
\label{eq:Fprod3}
2F^{(t_a,t_b)} = \epsilon_{abc} t_c
\end{equation}
since the associated matrices $T_a$ are simply given by the Gell-Mann matrices $\lambda_1, \lambda_2, \lambda_3$ and obey the commutation relations $[T_a,T_b] = i\epsilon_{abc} T_c$. According to~\eqref{eq:Fprod1}, one can read off the index $\sqrt{I_3}=2$ in~\eqref{eq:Fprod3} which signals an embedding of the defining representation of $\So(3)$, $\mathbf{3}$, in $\Su(3)$ (cf.~\eqref{E:index-so}). We note, in passing, that $\mathbf{3}$ has previously been distinguished from $\mathbf{2+1}$ using a generalized pseudoscalar~\cite{Nishi:2011gc}
\begin{equation}
\label{eq:nishi-pseudoscalar}
I(t_a,t_b,t_c) \equiv F(t_a,t_b)\cdot t_c
\end{equation}
with the values $\frac{1}{2}$ and $1$ characterizing $\mathbf{3}$ and $\mathbf{2+1}$. These numerical values are determined by embedding indices and, in particular, it is easy to see that~\eqref{eq:nishi-pseudoscalar} follows from~\eqref{eq:Fprod3}.

Applying Theorem~\ref{thm:cs}, we can now devise a practical procedure for verifying whether a 3HDM is custodial-symmetric which is summarized in Algorithm~\ref{alg:CS3HDM}. 
\begin{algorithm*}[!htbp]
\caption{Determining if a 3HDM potential has a CS}
\label{alg:CS3HDM}

\begin{enumerate}[align=left, label=\fbox{\arabic*}]
\item If dim(ker($\Lambda))\geq 3$ proceed, else return False.
\item Let $W_0^{LM}$ be the $LM$-orthogonal subspace of $\text{ker}(\Lambda)$. If $\text{dim}(W_0^{LM})\geq 3$ proceed, else return False.
\item If there exists three orthonormal vectors in $W_0^{LM}$ satisfying the F-product relations~\eqref{eq:Fprod3} return True, else return False.
\end{enumerate}
\end{algorithm*}

A nice feature of Algorithm~\ref{alg:CS3HDM} is that, if $\text{dim}(W_0^{LM})=3$, then the F-product relations~\eqref{eq:Fprod3} are satisfied in any orthonormal basis of $W_0^{LM}$ due to the invariance of these relations under rotation, cf.~Proposition~\ref{prop:rotation-invariance} in the Appendix. In the event that $\text{dim}(W_0^{LM})>3$, identifying a set of $LM$-orthonormal nullvectors satisfying the right F-products may be non-trivial. In Appendix~\ref{sec:EigenvalueDegeneraciesBeyondTheCharacteristic} we illustrate how this may be done by numerically solving a set of F-product closure equations for three orthonormal vectors in $W_0^{LM}$.

%%%%%%%%%%%%%%%%%%%%%%%%%%%%%%%%%%%%%%%%%%%%%%%%%
\subsection{$N=4$}
\label{sec:N4}
%%%%%%%%%%%%%%%%%%%%%%%%%%%%%%%%%%%%%%%%%%%%%%%%%

With four doublets the custodial block now takes the form 
\begin{equation}
\label{eq:CSblock4}
C_4 = \alpha\begin{pmatrix}
0&0&0&0&0&1 \\
0&0&0&0&-1&0 \\
0&0&0&1&0&0 \\
0&0&1&0&0&0 \\
0&-1&0&0&0&0 \\
1&0&0&0&0&0 \\
\end{pmatrix}
\end{equation}
with $\alpha$ a real constant. The cases $\alpha \neq 0$ and $\alpha=0$ have different basis-invariant signatures of CS and hence are identified by different conditions. To determine whether an arbitrary potential has CS, both manifestations, $\alpha \neq 0$ and $\alpha=0$, must be checked as described below. 

\subsubsection*{The case $\alpha\neq 0$}

With $\alpha\neq 0$, CS implies the matrix $\Lambda$ has two sets of threefold degenerate eigenvectors with eigenvalues $\pm \alpha$. In the basis were the symmetry is manifest, these eigenvectors have components
\begin{align}
t_1^+ &= \frac{1}{\sqrt{2}}(+1,0,0,0,0,-1,\mathbf{0}_9)^T \nonumber \\
t_2^+ &= \frac{1}{\sqrt{2}}(0,+1,0,0,+1,0,\mathbf{0}_9)^T  \nonumber \\
t_3^+ &= \frac{1}{\sqrt{2}}(0,0,-1,+1,0,0,\mathbf{0}_9)^T\\
t_1^- &= \frac{1}{\sqrt{2}}(+1,0,0,0,0,+1,\mathbf{0}_9)^T \nonumber \\
t_2^- &= \frac{1}{\sqrt{2}}(0,+1,0,0,-1,0,\mathbf{0}_9)^T \nonumber \\
t_3^- &= \frac{1}{\sqrt{2}}(0,0,+1,+1,0,0,\mathbf{0}_9)^T \nonumber 
\end{align}
and one finds that they satisfy the $\So(3)\oplus \So(3)$ F-product relations 
\begin{align}
\label{eq:Fprod4}
\sqrt{2}F^{(t_a^\pm, t_b^\pm)} &= \epsilon_{abc}t_c^\pm \\
F^{(t_a^\pm, t_b^\mp)} &= 0 \nonumber
\end{align}
which, as expected, come with index $\sqrt{I_\mathbf{4}}=\sqrt{2}$ and correspond to an embedding of the defining representation of $\So(4)$. The only other faithful representation which could arise, $\mathbf{2 + 2'}$~\cite{Plantey:2024yfm}, would have index $I_\mathbf{2 + 2'}=1$ 
(cf.~Table~\ref{tab:alternative-indices} or \cite{mckay1981tables}) and could be easily discarded. 

Having found the F-product relations characterizing CS, Theorem~\ref{thm:cs} can be implemented into Algorithm~\ref{alg:CS4HDM-nondegen} to detect $\alpha\neq 0$ instances of CS. Note that there may be several pairs of threefold degenerate eigenvalues $\pm \alpha$, and the algorithm must be applied once for each possible $\alpha$. 

\begin{algorithm*}[!htbp]
\caption{Determining if a 4HDM potential has a CS ($\alpha \neq 0$)}
\label{alg:CS4HDM-nondegen}

\begin{enumerate}[align=left, label=\fbox{\arabic*}]
\item If, for any $\alpha \in \mathbb{R}$, $\Lambda$ has two eigenvalues $-\alpha$ and $\alpha$ such that $\text{dim}(W_{\pm \alpha}) \geq 3$ for both eigenvalue spaces $W_{\pm \alpha}$, proceed. Else return False. 
\item Let $W_{\pm \alpha}^{LM}$ be the $LM$-orthogonal subspaces of $W_{\pm \alpha}$. If $\text{dim}(W_{\pm \alpha}^{LM})\geq 3$ proceed, else return False.
\item If two subsets of three basis vectors $v_a^{\pm}$ satisfy the F-product relations~\eqref{eq:Fprod4} return True, else return False.
\end{enumerate}
\end{algorithm*}

The F-product relations~(\ref{eq:Fprod4}) determine all the Lie algebra bases which correspond to CS for this model and it is remarkable that, when $\text{dim}(W_{+\alpha}^{LM})=\text{dim}(W_{-\alpha}^{LM})=3$, analogously to the case of the 3HDM, these relations are independent of the choice of bases for the $LM$-orthogonal degenerate subspaces $W_{\pm \alpha}^{LM}$, as follows directly from Proposition~\ref{prop:rotation-invariance}. When there are extra degeneracies and $\text{dim}(W_{\pm\alpha}^{LM}) > 3$, the techniques of Appendix~\ref{sec:EigenvalueDegeneraciesBeyondTheCharacteristic} may be necessary in step 3 of Algorithm~\ref{alg:CS4HDM-nondegen} to isolate two sets of orthonormal eigenvectors satisfying the F-products~\eqref{eq:Fprod4}.

\subsubsection*{The case $\alpha=0$}
\label{sec:degen4}

It may happen that the potential under consideration corresponds to an instance of CS where $\alpha=0$. In that case $\Lambda$ has 6 nullvectors generating the defining representation of $\So(4)$. Note that, in contrast with the case $\alpha \neq 0$, any basis of the defining representation of $\So(4)$ will correspond to CS. Therefore there are no particular F-product relations to be checked. 
Instead one must verify whether or not the 6 nullvectors induce the defining representation of $\So(4)$. This is a slightly stronger condition than the existence of an order-2 $CP$ symmetry~\cite{Plantey:2024yfm}, thus $\alpha=0$ manifestations of CS can be checked by applying Algorithm~3 from~\cite{Plantey:2024yfm} and restricting the candidate eigenvectors to nullvectors. In case of more than 6 nullvectors, the methods of Appendix~\ref{sec:EigenvalueDegeneraciesBeyondTheCharacteristic} may be applied.

In an earlier work \cite{Solberg:2018aav} on CS by one of the authors, the custodially invariant terms $I^4_{abcd}$ were not included, and hence, conditions only for the cases of the type $\alpha= 0$ (which automatically holds for the 3HDM) were given. Thus, the present work supersedes \cite{Solberg:2018aav}. Moreover, the conditions in the present article are far more analytical than the conditions in \cite{Solberg:2018aav} since they, in the absence of extended degeneracies, do not rely on solving large systems of quadratic equations. Therefore, the methods of the present article may be more efficient in several cases, in addition to being complete and less numerical in nature. Nevertheless, in the presence of extended degeneracies, like 7 nullvectors for $N=4$, the numerical methods of Appendix~\ref{sec:EigenvalueDegeneraciesBeyondTheCharacteristic} together with the conditions of \cite{Solberg:2018aav}, might just as efficiently determine if a potential is custodial-symmetric, since we in this case will have to find the minimum of a quartic polynomial (the cost function) in both approaches, cf.~Appendix~\ref{sec:EigenvalueDegeneraciesBeyondTheCharacteristic}. Anyway, \cite{Solberg:2018aav} will here yield the same results as the present article. However, applying the original numerical methods of \cite{Solberg:2018aav} will be significantly more computationally demanding.       

%%%%%%%%%%%%%%%%%%%%%%%%%%%%%%%%%%%%%%%%%%%%%%%%%
\subsection{$N=5$}
\label{sec:N5}
%%%%%%%%%%%%%%%%%%%%%%%%%%%%%%%%%%%%%%%%%%%%%%%%%
Increasing the number of doublets to five, the number of free parameters $\lambda_{abcd}$ in the custodial block increases to $\small \binom{5}{4}= 5$ which seems to make the detection of CS more difficult as the eigenvectors of $C_5$ are not constants as they were for $N=4$ and $N=3$. However, we show in Proposition~\ref{prop:5HDM-rotations} in the Appendix that $C_5$ always can be transformed into the form
\begin{gather}
\label{eq:CSblock5}
C_5 = \alpha D_5^{1234} = \alpha
\begin{pmatrix}
0&0&0&0&0&0&0&1&0&0\\
0&0&0&0&0&-1&0&0&0&0\\
0&0&0&0&1&0&0&0&0&0\\
0&0&0&0&0&0&0&0&0&0\\
0&0&1&0&0&0&0&0&0&0\\
0&-1&0&0&0&0&0&0&0&0\\
0&0&0&0&0&0&0&0&0&0\\
1&0&0&0&0&0&0&0&0&0\\
0&0&0&0&0&0&0&0&0&0\\
0&0&0&0&0&0&0&0&0&0\\
\end{pmatrix}
\end{gather}
by a rotation of the doublets. Therefore all instances of CS for the 5HDM are equivalent to~(\ref{eq:CSblock5}). As in the 4HDM, the cases $\alpha=0$ and $\alpha \neq 0$ must be treated separately. Moreover, we note that $C_5$ in~\eqref{eq:CSblock5} is identical to $C_4$ in~\eqref{eq:CSblock4} if the four zero rows and columns are removed. Hence we get the same eigenvalue pattern as for $N=4$, except for 4 additional nullvectors. These characteristic eigenvalue degeneracies are also mentioned in reference~\cite{Nishi:2011gc}.

\subsubsection*{The case $\alpha \neq 0$}

Thanks to the equivalences among all instances of CS discussed above, the constant eigenvectors of~(\ref{eq:CSblock5}) 
\begin{align}
\label{eq:evecs5}
t_1^+ &= \frac{1}{\sqrt{2}}(+1, 0, 0, 0, 0, 0, 0, -1, 0, 0, \mathbf{0}_{14})^T \nonumber \\
t_2^+ &= \frac{1}{\sqrt{2}}(0, +1, 0, 0, 0, +1, 0, 0, 0, 0, \mathbf{0}_{14})^T \nonumber \\
t_3^+ &= \frac{1}{\sqrt{2}}(0, 0, -1, 0, +1, 0, 0, 0, 0, 0, \mathbf{0}_{14})^T \nonumber \\
t_1^- &= \frac{1}{\sqrt{2}}(+1, 0, 0, 0, 0, 0, 0, +1, 0, 0, \mathbf{0}_{14})^T \nonumber \\
t_2^- &= \frac{1}{\sqrt{2}}(0, +1, 0, 0, 0, -1, 0, 0, 0, 0, \mathbf{0}_{14})^T \nonumber \\
t_3^- &= \frac{1}{\sqrt{2}}(0, 0, +1, 0, +1, 0, 0, 0, 0, 0, \mathbf{0}_{14})^T 
\end{align}
including nullvectors
\begin{align}
\label{eq:evecs5null}
n_1 &= (0, 0, 0, +1, 0, 0, 0, 0, 0, 0, \mathbf{0}_{14})^T \nonumber \\
n_2 &= (0, 0, 0, 0, 0, 0, +1, 0, 0, 0, \mathbf{0}_{14})^T \nonumber \\
n_3 &= (0, 0, 0, 0, 0, 0, 0, 0, +1, 0, \mathbf{0}_{14})^T \nonumber \\
n_4 &= (0, 0, 0, 0, 0, 0, 0, 0, 0, +1, \mathbf{0}_{14})^T 
\end{align}
characterize CS in the 5HDM. In~(\ref{eq:evecs5}) the eigenvectors $t_a^\pm,\,(a=1,2,3)$ have eigenvalue $\pm \alpha$ and satisfy $\So(4)\cong \So(3)\oplus \So(3)$ F-products
\begin{align}
\label{eq:Fprod5}
\sqrt{2} F^{(t_a^\pm, t_b^\pm)} &= \epsilon_{abc}t_c^\pm \\
F^{(t_a^\pm, t_b^\mp)} &= 0 \nonumber.
\end{align}
The F-product relations involving the nullvectors $n_a$ are not meaningful in practice since they depend on which basis is chosen for the nullspace. Without all the F-products one cannot establish whether or not a given set of $10$ eigenvectors spans $\So(5)$. Indeed, even if~(\ref{eq:Fprod5}) is satisfied, it may be that the 10 eigenvectors do not generate a subalgebra i.e.~do not close under the F-product. To ensure that one has found a 10-dimensional subalgebra one can use projectors as follows. 
Let $v_a$ be a set of 10 candidate orthonormal eigenvectors of $\Lambda$, this set closes under the F-product if and only if
\begin{align}
\label{eq:FprodClosure}
(I - P_0) F^{(v_a,v_b)} = 0\, , \quad \forall a,b\in \{1,\ldots,10\}
\end{align}
where $I$ is the identity matrix and $P_0 = \sum_{a=1}^{10} v_a v_a^T$ is a projector onto the subspace spanned by this subset of eigenvectors.

Analyzing the subalgebras of the classical Lie algebras~\cite{Dynkin:1957um, Feger:2019tvk, Lorente:1972xw} one finds that the only 10d $\Su(5)$ subalgebra containing an $\So(4)$ subalgebra is $\So(5) \cong \mathfrak{sp}(4)$. Moreover, the prefactor $\sqrt{I_{\mathbf{5}}}=\sqrt{2}$ in~\eqref{eq:Fprod5} corresponds to the embedding index of the defining representation of $\So(5)$ in $\Su(5)$. Thus, these F-product relations, although incomplete, are still sufficient to establish CS using Theorem~\ref{thm:cs}, provided that the eigenvalue pattern is correct and that the six eigenvectors completed with 4 nullvectors form a subalgebra. The practical steps for checking $\alpha \neq 0$  instances of CS of a 5HDM potential are given in Algorithm~\ref{alg:CS5HDM-nondegen} below. As in the $N=4$ case, this algorithm is to be applied once for each pair of threefold degenerate eigenvalues $\alpha$.
\begin{algorithm*}[!htbp]
\caption{Determining if a 5HDM potential has a CS ($\alpha \neq 0$)}
\label{alg:CS5HDM-nondegen}

\begin{enumerate}[align=left, label=\fbox{\arabic*}]
\item If, for any $\alpha \in \mathbb{R}$, $\Lambda$ has two eigenvalues $-\alpha$ and $\alpha$ such that $\text{dim}(W_{\pm \alpha}) \geq 3$ for both eigenvalue spaces $W_{\pm \alpha}$, and $\text{dim}(W_0) \geq 4$, proceed. Else return False. 
\item Let $W^{LM}$ be the $LM$-orthogonal subspaces of $W$. If $\text{dim}(W_{\pm \alpha}^{LM})\geq 3$ and $\text{dim}(W_0^{LM})\geq 4$ proceed, else return False.
\item If two subsets of three orthonormal vectors of $W_{\pm\alpha}^{LM}$ satisfy the F-product relations~\eqref{eq:Fprod5} and can be completed by four vectors of $W^{LM}_0$ into a 10d subalgebra, return True. Else return False
\end{enumerate}
\end{algorithm*}

It should be noted that, as before, this CS test relies on verifying whether sets of three degenerate eigenvectors satisfy $\So(3)$ F-product relations, which are independent of the choice of orthonormal basis for the degenerate subspace. It is again this fact that makes the test practical to implement. Moreover, if $\text{dim}(W^{LM}_0) = 4$ then, in step 4 of Algorithm~\ref{alg:CS5HDM-nondegen}, closure can be checked in any basis of $W^{LM}_0$. However, cases with extended degeneracies such that either $\text{dim}(W^{LM}_0) > 4$ or $\text{dim}(W^{LM}_{\pm\alpha}) > 3$ require special treatment, which is described in Appendix~\ref{sec:EigenvalueDegeneraciesBeyondTheCharacteristic}.

\subsubsection*{The case $\alpha = 0$}
\label{sec:degen5}
To check whether the potential corresponds to an instance of CS with $\alpha=0$ one has to check whether a set of 10 nullvectors gives the defining representation of $\So(5)$. This can be done in exactly the same way as for the 4HDM (cf.~\ref{sec:degen4}).   

%%%%%%%%%%%%%%%%%%%%%%%%%%%%%%%%%%%%%%%%%%%%%%%%%
\subsection{$N > 5$}
\label{sec:beyondN5}
%%%%%%%%%%%%%%%%%%%%%%%%%%%%%%%%%%%%%%%%%%%%%%%%%

As the number of doublets increases, the basis-invariant signatures of CS become more and more subtle. This is because the number of parameters in the custodial block grows as $\small 
\binom{N}{4}$ and its eigenvalues and eigenvectors become functions of more and more parameters, removing the possibility for clear patterns. Hence it becomes increasingly difficult to detect CS. An exception is instances of CS where all the eigenvalues of the custodial block are zero. Then CS can be identified, exactly as with $N=4$ and $N=5$ (cf.~sections~\ref{sec:degen4}), by applying the CP2 detection methods of~\cite{Plantey:2024yfm} restricted to nullvectors of $\Lambda$. In the case of more than $k=N(N-1)/2$ nullvectors, techniques like the ones found in Appendix~\ref{sec:EigenvalueDegeneraciesBeyondTheCharacteristic} may be invoked. For the remaining instances of CS, where some eigenvalues of the custodial block are non-zero, corresponding to the presence of terms~\eqref{eq:CC} in the potential, we outline below the difficulties that arise beyond $N=5$ doublets.

With $6$ doublets, the custodial block $C_6$ has six two-fold degenerate eigenvalues appearing in three pairs
\begin{align}
\label{eq:evals6}
(-\alpha_1,\alpha_1),\,(-\alpha_2,\alpha_2),\,(-\alpha_3,\alpha_3).
\end{align}
The corresponding 12 eigenvectors are contained in the $\So(6)$ subalgebra but cannot span it since it has dimension 15. The absence of an eigenvalue pattern for the three remaining eigenvectors which are needed to span $\So(6)$ means one would need to check if any of the $\small \binom{35-12}{3}=1771$ sets of three eigenvectors can complete the 12 remarkable eigenvectors into a basis of $\So(6)$. Moreover, because of the eigenvalue pattern~(\ref{eq:evals6}), $\So(3)$ subalgebras cannot coincide with any of the degenerate subspaces. Hence, for $N=6$, a test based on verifying F-products would be impractical because the F-products would depend on the choice of basis for the degenerate subspaces.   

Beyond $N=6$ we do not observe any eigenvalue pattern which significantly complicates the characterization of CS. However the custodial block $C_N$, being traceless for all $N$, has $k-1=\frac{(N+1)(N-2)}{2}$ independent eigenvalues which are functions of its $\small 
\binom{N}{4}$ parameters $\lambda_{abcd}$. Beyond the scope of this work lies an interesting but possibly difficult question: can any set of $k$ real numbers $\{\alpha_i \in \mathbb{R} | i=1,\ldots,k,\: \sum \alpha_i = 0\}$ be the set of roots of the characteristic polynomial of $C_N$ for some set of parameters $\{\lambda_{abcd}\}$? If this is true then the problem will simplify significantly, although the difficulty of identifying which bases of $\So(N)$ correspond to CS will remain.

%%%%%%%%%%%%%%%%%%%%%%%%%%%%%%%%%%%%%%%%%%%%%%%%%
\section{Summary}
\label{sec:Summary}
%%%%%%%%%%%%%%%%%%%%%%%%%%%%%%%%%%%%%%%%%%%%%%%%%

We have found a characterization of CS for scalar potentials with any number of doublets based on geometrical and representation-theoretical relations among the adjoint quantities $L,M$ and $\Lambda$ which characterize a potential in its bilinear form. To do so we considered the canonical form of the NHDM potential with CS and extracted an eigenvalue pattern in $\Lambda$ which naturally must be present in any Higgs basis. We then showed that CS is present when the corresponding eigenvectors coincide with particular bases of the defining representation of $\So(N)$, characterized by specific F-product relations. The task of distinguishing representations was achieved by means of embedding indices, which become apparent in F-product relations of normalized eigenvectors.

For $N\leq 5$, the presence or absence of the CS eigenvalue pattern is straightforward to identify, and we provide practical algorithms for establishing the presence or absence of CS for any numerical instance of a potential, and also for generic potentials with indeterminate coefficients, at least in the case the eigenvectors of $\Lambda$ are constant. In special cases where $\Lambda$ has highly degenerate eigenvalues, one runs into the problem of isolating Lie algebras inside of arbitrary vector spaces for which we provided a solving method. 

With more than five doublets, the CS eigenvalue pattern essentially fades away and a practical implementation of our characterization was not found. 

%%%%%%%%%%%%%%%%%%%%%%%%%%%%%%%%%%%%%%%%%%%%%%%%
\section*{Acknowledgements}
%%%%%%%%%%%%%%%%%%%%%%%%%%%%%%%%%%%%%%%%%%%%%%%%
RP is grateful to Igor P.~Ivanov, Celso C.~Nishi and Andreas Trautner for stimulating
discussions and helpful comments which enhanced his understanding of covariants-based
methods. MS is appreciative of Igor P.~Ivanov for recommending the covariant approach of reference
\cite{deMedeirosVarzielas:2019rrp} for recognizing CS.

%%%%%%%%%%%%%%%%%%%%%%%%%%%%%%%%%%%%%%%%%%%%%%%%%
\appendix
\section{Some mathematical results}
\label{sec:SomeMathematicalResults}
%%%%%%%%%%%%%%%%%%%%%%%%%%%%%%%%%%%%%%%%%%%%%%%%%
 A basis for a representation of a Lie algebra $\mathfrak{g}$ may be written as $\{B^i\}_{i=1}^b$, where $b$ is the number of matrices in the basis, which may be less than the dimension of $\mathfrak{g}$ if the representation is not faithful.
 In the following Lemma, we will abbreviate such a basis by $\{B^i\}$ and hence, for simplicity, suppress the range of the index $i$.
\begin{lemma}[A generalized Schur's Lemma]
\label{lemma:genSchursLemma}
Let $\{B^i\}=\{\text{diag}\,(B_1^i,\ldots,B_k^i)\}$ be a basis for a complex representation of a Lie algebra $\mathfrak{g}$ written
in block diagonal form, where each set of $n_j\times n_j$-dimensional matrices $\{B_j^i\}$ is the basis of an irreducible representation of $\mathfrak{g}$. Moreover, assume that each irreducible representation $\{B_j^i\}$ only occur once. Then a matrix $M$ which commutes with all matrices in $\{B^i\}$ will be of the form 
\begin{align}\label{E:Mdiag}
	M=\text{diag}\,(\lambda_1 I_{n_1\times n_1},\ldots,\lambda_kI_{n_k\times n_k}) 
\end{align}
for complex numbers $\lambda_j$.
\end{lemma}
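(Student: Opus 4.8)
The plan is to reduce the statement to the standard Schur's Lemma applied block by block. First I would observe that since $M$ commutes with every $B^i = \text{diag}(B^i_1,\ldots,B^i_k)$, and each $B^i$ is block diagonal, it is natural to partition $M$ into blocks $M = (M_{jl})$ conformally with the block structure, so $M_{jl}$ is an $n_j \times n_l$ matrix. Writing out the commutation condition $M B^i = B^i M$ block by block gives $M_{jl} B^i_l = B^i_j M_{jl}$ for all $i$ and all pairs $(j,l)$. Thus $M_{jl}$ is an intertwiner between the irreducible representations $\{B^i_l\}$ and $\{B^i_j\}$.

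Next I would invoke Schur's Lemma in its two guises. For the off-diagonal blocks $j \neq l$: since by assumption each irreducible representation occurs only once, $\{B^i_j\}$ and $\{B^i_l\}$ are inequivalent irreducible representations, so any intertwiner between them must be the zero map, giving $M_{jl} = 0$. For the diagonal blocks $j = l$: $M_{jj}$ commutes with all matrices of the irreducible representation $\{B^i_j\}$, so by Schur's Lemma (over $\mathbb{C}$, which is why we need a complex representation) $M_{jj} = \lambda_j I_{n_j \times n_j}$ for some complex scalar $\lambda_j$. Assembling these results yields exactly the claimed form~\eqref{E:Mdiag}.

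One small technical point I would address is the precise form of Schur's Lemma being used: strictly speaking one works with the associative algebra (or group) generated by the representation, or equivalently notes that an intertwiner of Lie algebra representations is the same as an intertwiner of the corresponding enveloping-algebra modules; the standard statement then applies. I would also note that ``irreducible'' here means irreducible as a complex representation, which is what guarantees both that the commutant of a single block is one-dimensional and that inequivalent blocks have no nonzero intertwiners. The main (and really only) obstacle is bookkeeping: making sure the block decomposition of $M$ is set up correctly and that the ``each irreducible representation occurs only once'' hypothesis is used in exactly the right place, namely to rule out nonzero off-diagonal intertwiners between equivalent summands. There is no deep difficulty beyond correctly citing Schur's Lemma; the proof is essentially a direct computation once the blocks are introduced.
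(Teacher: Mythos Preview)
Your proposal is correct and follows essentially the same approach as the paper: partition $M$ conformally into blocks, derive the intertwiner relation $M_{jl}B^i_l = B^i_j M_{jl}$, and apply Schur's Lemma to conclude the diagonal blocks are scalars and the off-diagonal blocks vanish. The only cosmetic difference is that the paper re-derives the off-diagonal case from scratch (via a nullspace/contradiction argument handling square and non-square $M_{mn}$ separately) rather than simply citing the intertwiner form of Schur's Lemma as you do.
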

\begin{proof}
  Write $M$ in block form with blocks $M_{mn}$, where the $k$ diagonal blocks have the same dimensions as
 the diagonal blocks of $\{B^i\}$. Then the ordinary Schur's Lemma gives us that each diagonal block $M_{mm}$ of $M$ must be a multiple of identity, since $M$ commutes with
 $B^i_m$ for all $i$. 

Moreover, the off-diagonal block elements (not necessarily square) of $M$ have to be zero. 
Indeed, suppose $MB^i=B^iM$ for all $i$, and consider an off-diagonal block which, consequently, satisfies 
\begin{align}\label{E:offDiagBlocks}
	B_m^i M_{mn} = M_{mn} B_n^i,
\end{align}
for all $i$, with no sum over $m$ or $n$, and with $m\ne n$. We will show by contradiction that
the matrices $M_{mn}=0$, i.e.~the off-diagonal blocks of $M$ are zero. 

Assume that $M_{mn}\ne 0$. We then claim that $M_{mn}$ has a non-trivial nullspace (i.e.~the nullspace is neither zero nor the whole space $M_{mn}$ is acting on).
Indeed, if the matrix $M_{mn}$ is square then it cannot be invertible, for then~\eqref{E:offDiagBlocks} would infer that the 
representation $\{B_m^i\}$ is equivalent to the representation $\{B_n^i\}$, contrary to the premise of the Lemma.
Since the matrix $M_{mn}$ is not invertible but non-zero, it has a non-trivial nullspace.  
On the other hand, if $M_{mn}$ is not square, the non-zero $M_{mn}$ will always have a non-trivial nullspace either by multiplying vectors from the left (cokernel) or from the right (kernel). 
Now let $W$ be the nullspace of $M_{mn}$, and assume the number of columns of $M_{mn}$ is greater or equal to the number of rows, i.e.~it has a non-trivial kernel.
Then~\eqref{E:offDiagBlocks} gives 
\begin{align}\label{E:offDiagBlocks2}
	0= M_{mn} B_n^i W,
\end{align}
  but since $\{B_n^i\}$ is irreducible, we can find an index $i$
such that $W'\equiv B_n^i W \nsubseteq W $, otherwise $W$ would be an invariant subspace. But then~\eqref{E:offDiagBlocks2} yields
$M_{mn}W'=0$ which contradicts that $W$ was the nullspace of $M_{mn}$. Hence $M_{mn}=0$.

In case the number of columns of $M_{mn}$ is less than the number of rows,
the same argument as above can be applied on the transpose of 
eq.~\eqref{E:offDiagBlocks}: $M_{mn}^T$ will then have a nullspace by multiplying from the right, and this will lead to the same contradiction as before, since $\{B_m^i\}$ generates an irreducible representation
if and only if $\{(B_m^i)^T\}$ generates an irreducible representation. The latter follows from that a representation is irreducible if and only if the dual representation is irreducible.
Hence $M$ is of the block diagonal form~\eqref{E:Mdiag}.
\end{proof}

\begin{prop}
\label{prop:faithful}
Let $\Lambda$ be a real symmetric matrix. If a subset $\{v_a\}$ of eigenvectors of $\Lambda$ provide a representation of the basis elements $\{b_a\}$ of a Lie algebra $\mathfrak g \subseteq \Su(N)$ as $\Pi(b_a) \equiv v_{ai} \lambda_i$, then $\Pi$ is a faithful representation of $\mathfrak g$. 
\end{prop}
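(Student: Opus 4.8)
The plan is to reduce faithfulness of $\Pi$ to the linear independence of the vectors $v_a$, using that the assignment $b_a\mapsto\Pi(b_a)$ is literally the linear isomorphism $\Omega$ of~\eqref{eq:isomorphism} evaluated on the $v_a$. Since $\{v_a\}$ is taken to be a set of eigenvectors of the real symmetric matrix $\Lambda$, I would first record that it is a \emph{linearly independent} set: eigenvectors of $\Lambda$ attached to distinct eigenvalues are automatically orthogonal, and those sharing an eigenvalue are chosen to form an orthonormal system inside the corresponding eigenspace, so $\{v_a\}$ can always be completed to an orthonormal eigenbasis of $\Lambda$. Likewise $\{b_a\}$ is a basis of $\mathfrak g$ and hence linearly independent.

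Next I would write a general element of $\mathfrak g$ as $X=c_a b_a$ and compute $\Pi(X)=c_a v_{ai}\lambda_i=\Omega(w)$, where $w\equiv c_a v_a\in\mathbb{R}^{N^2-1}$. Because $\Omega:\mathbb{R}^{N^2-1}\to\Su(N)$ is an isomorphism, in particular injective, $\Pi(X)=0$ is equivalent to $w=0$, and by the linear independence of the $v_a$ this forces all $c_a=0$, i.e.\ $X=0$. Thus $\ker\Pi=\{0\}$; since $\Pi$ is by hypothesis a Lie algebra homomorphism, it is a faithful representation of $\mathfrak g$.

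The argument collapses to one line once the right object, $\Omega$, is identified, so I do not expect a genuine obstacle; the only point deserving care is the linear independence of the $v_a$. This is not a formality: if two of the eigenvectors coincided (or were proportional), the map $\Pi$ would identify the associated basis elements and faithfulness would fail, so the true content of the proposition is that the symmetric-matrix hypothesis supplies this independence for free. I would also note that the more structural route---arguing that $\ker\Pi$ is an ideal of $\mathfrak g$ and invoking simplicity---is unavailable in the cases of interest, since $\mathfrak g=\So(4)\cong\So(3)\oplus\So(3)$ is not simple; the vector-space argument above treats all $\mathfrak g$ uniformly.
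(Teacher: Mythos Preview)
Your proof is correct and follows essentially the same approach as the paper: both reduce faithfulness of $\Pi$ to the linear independence of the eigenvectors $\{v_a\}$, which is guaranteed because a real symmetric $\Lambda$ admits an orthonormal eigenbasis. The paper phrases this as a contradiction (non-faithful $\Rightarrow$ $\{v_a\}$ dependent $\Rightarrow$ $\Lambda$ not diagonalizable), whereas you argue directly via the injectivity of $\Omega$ and are more explicit about why the $v_a$ are independent, but the substance is identical.
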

\begin{proof}
Suppose the representation $\Pi$ is not faithful, then there exists $h \in \mathfrak g$, $h=h_a b_a \neq 0$, such that $\Pi(h) = h_a v_{ai}\lambda_i = 0$. That is, $\{v_a\}$ is not a linearly independent set. But then the eigenvectors of $\Lambda$ do not span $\mathbb{R}^{N^2-1}$ and $\Lambda$ isn't diagonalizable, contradicting the assumption that $\Lambda$ is a real symmetric matrix. 
\end{proof}

\begin{prop}
\label{prop:unitary-equiv}
Let $\Pi$ and $\Pi'$ be two Hermitian (or two anti-Hermitian), complex, $N$-dimensional representations of the same Lie algebra $\mathfrak{g}$. Furthermore, assume the representations are equivalent, i.e.~there exists a matrix $S$ such that $\Pi(X) = S\Pi'(X)S^{-1}$ for all $X\in \mathfrak{g}$, and let each irreducible component of $\Pi$ only occur one time.
Then $S$ can be chosen to be special unitary.
\end{prop}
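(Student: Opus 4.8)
The plan is to reduce the statement to the well-known fact that two equivalent unitary (finite-dimensional) representations are unitarily equivalent, and then to upgrade "unitary" to "special unitary" using the scalar freedom that remains. First I would handle the Hermitian case; the anti-Hermitian case follows by multiplying every generator by $i$, which sends $\Pi(X)\mapsto i\Pi(X)$ and preserves both equivalence and the multiplicity hypothesis. So assume $\Pi(X)=S\Pi'(X)S^{-1}$ with $\Pi(X),\Pi'(X)$ Hermitian for all $X\in\mathfrak g$. Exponentiating, $e^{i\Pi(X)}$ and $e^{i\Pi'(X)}$ are unitary matrices furnishing equivalent unitary representations of the (simply connected cover of the) corresponding group, intertwined by the same $S$.

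The key step is the standard averaging/polar-decomposition argument: from $S\Pi'(X)=\Pi(X)S$ and the Hermiticity of $\Pi,\Pi'$ one gets, taking adjoints, $\Pi'(X)S^\dagger=S^\dagger\Pi(X)$, hence $S^\dagger S$ commutes with $\Pi'(X)$ for all $X$. Since $S^\dagger S$ is positive definite, so is its unique positive square root $P=(S^\dagger S)^{1/2}$, and $P$ is a polynomial in $S^\dagger S$, so $P$ also commutes with all $\Pi'(X)$. Then $Q\equiv SP^{-1}$ is unitary (because $Q^\dagger Q=P^{-1}S^\dagger S P^{-1}=I$) and still intertwines the representations: $Q\Pi'(X)Q^{-1}=SP^{-1}\Pi'(X)PS^{-1}=S\Pi'(X)S^{-1}=\Pi(X)$, using that $P$ commutes with $\Pi'(X)$. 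Thus the equivalence can already be realized by a unitary $Q$.

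It remains to arrange $\det Q=1$. Here is where the multiplicity hypothesis — each irreducible component of $\Pi$ occurs exactly once — enters, via Lemma~\ref{lemma:genSchursLemma}: if $Q_1,Q_2$ are two unitaries both intertwining $\Pi'$ into $\Pi$, then $Q_1^{-1}Q_2$ commutes with all $\Pi'(X)$ and hence, by that lemma, is block-scalar, $\mathrm{diag}(\mu_1 I_{n_1},\ldots,\mu_r I_{n_r})$ with $|\mu_j|=1$. So the intertwiner is unique up to such block phases, and we may use this freedom to fix the determinant: replace $Q$ by $Q\cdot\mathrm{diag}(\mu_1 I_{n_1},\ldots,\mu_r I_{n_r})$, which multiplies $\det Q$ by $\prod_j \mu_j^{n_j}$. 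Since $|\det Q|=1$, we can write $\det Q=e^{i\theta}$ and choose, say, $\mu_1=e^{-i\theta/n_1}$ and $\mu_j=1$ for $j>1$, making the new determinant equal to $1$. (If $\Pi$ is irreducible this is just the familiar rescaling $Q\mapsto (\det Q)^{-1/N}Q$, valid since $(\det Q)^{-1/N}$ is a phase.) The resulting special unitary $S$ satisfies $\Pi(X)=S\Pi'(X)S^{-1}$, completing the proof. The only subtle point — and the place where the hypotheses are genuinely used — is this last determinant adjustment: without the "multiplicity one" assumption the commutant would be larger and the argument would still go through, but with it we get the clean statement that the block-phase freedom suffices; the positive-square-root step is the technical heart but is entirely routine.
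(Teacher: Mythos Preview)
Your proof is correct, and the polar-decomposition route is cleaner than the paper's. The paper instead puts $\Pi$ in block-diagonal form, observes that $SS^\dagger$ commutes with $\Pi$, and invokes Lemma~\ref{lemma:genSchursLemma} (which \emph{requires} the multiplicity-one hypothesis) to conclude that $SS^\dagger=\mathrm{diag}(\lambda_1 I_{n_1},\ldots,\lambda_k I_{n_k})$; the unitary is then built by dividing row $i$ of $S$ by $\sqrt{\lambda_i}$, and verifying the intertwining property uses the block-diagonal form of $\Pi$ explicitly. Your argument bypasses Lemma~\ref{lemma:genSchursLemma} entirely in the construction of $Q$, since $P=(S^\dagger S)^{1/2}$ commutes with $\Pi'$ regardless of multiplicities. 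In fact, the multiplicity-one hypothesis is not needed \emph{anywhere} in your proof: the block-phase adjustment in your step~5 is more machinery than required, because the global rescaling $Q\mapsto e^{-i\theta/N}Q$ (a scalar, hence in the commutant of any representation) already fixes the determinant, which is exactly what the paper does in its last line. So contrary to your remark that step~5 is where the hypothesis is ``genuinely used,'' it never enters your argument at all---your proof in effect establishes the proposition without that assumption.
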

\begin{proof}
The case where the two representations $\Pi$ and $\Pi'$ are irreducible,
was proven in~\cite{Plantey:2024yfm}, although this case will be a special case of the argument below.

If the representation $\Pi$ is reducible, we may perform a basis shift on the vector space $V=\mathbb{C}^N$ the representation is acting on, such that the matrices $\Pi(X)$ are block diagonal for all 
$X\in \mathfrak{g}$.  
By the Hermiticity (or anti-Hermiticity) of the representations, 
\begin{align}\label{E:eqvHerm}
\Pi'(X)=S^{-1}\Pi(X) S= S^\dag \Pi(X) (S^{-1})^\dag ,	
\end{align}
for all $X\in \mathfrak{g}$.
By multiplying~\eqref{E:eqvHerm} by $S$ from the left, and by $S^\dag$
from the right, we see the matrix $S S^\dagger$  
commute with the block diagonal matrices $\Pi(X)$.
By Lemma~\ref{lemma:genSchursLemma} (a generalized Schur's Lemma), 
the matrix $S S^\dagger$ then must be diagonal, where the diagonal elements of $S S^\dagger$ are numbers $\lambda_i>0$ (positive since $S S^\dagger$ is positive-definite), and where $\lambda_i$ has the same value for all $i$ corresponding to the same irreducible component (i.e.~each block) of the matrices $\Pi(X)$. %, cf.~Lemma~\ref{lemma:genSchursLemma}. 
 By dividing
each row of $S$, indexed by $i$, by $\sqrt{\lambda_i}$, we then obtain a  matrix $U$ which is unitary, since $U U^\dagger=I$. Then
$\Pi(X)=U \Pi'(X) U^\dagger$, since $(\Pi(X))_{ij}=S_{im}(\Pi'(X))_{mn}S^{-1}_{nj}=(S_{im}/\sqrt{\lambda_i})(\Pi'(X))_{mn}(S^{-1}_{nj}\cdot \sqrt{\lambda_j})=U_{im}(\Pi'(X))_{mn}U^\dagger_{nj}$, where the second equality applies that $(\Pi(X))_{ij}=0$ when $i$ and $j$ corresponds to different blocks, since $(\Pi(X))_{ij}$ was block diagonal. Finally, we can write $U=e^{i\theta}U'$ where $U'$ is special unitary, and then $U'$ is the matrix sought in the Proposition.
\end{proof}

An important special case of Proposition~\ref{prop:unitary-equiv} is then
\begin{prop}
\label{prop:unitary-equivRedIrred}
Two equivalent representations of $\So(N)$ contained in $\Su(N)$, containing only one copy of each irreducible component,
may always be related by a similarity transformation given by a special unitary matrix $U$.
\end{prop}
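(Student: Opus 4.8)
The plan is to derive Proposition~\ref{prop:unitary-equivRedIrred} as a direct specialization of Proposition~\ref{prop:unitary-equiv}. First I would observe that the hypotheses of Proposition~\ref{prop:unitary-equivRedIrred} match those of Proposition~\ref{prop:unitary-equiv} almost verbatim: we have two representations of the same Lie algebra $\mathfrak{g} = \So(N)$, both realized inside $\Su(N)$, both assumed equivalent, and both containing only one copy of each irreducible component. The one thing that needs checking is the Hermiticity (or anti-Hermiticity) hypothesis. This is where the ambient algebra $\Su(N)$ enters: in the physicists' convention used throughout the paper, elements of $\Su(N)$ are Hermitian matrices (they are real linear combinations of the generalized Gell-Mann matrices $\lambda_i$, which are Hermitian), so any representation of $\So(N)$ whose image lies in $\Su(N)$ automatically consists of Hermitian matrices. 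Hence both representations in the statement are Hermitian, and the anti-Hermitian alternative is not even needed.

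With that verification in hand, the proof is essentially one line: apply Proposition~\ref{prop:unitary-equiv} with $\Pi, \Pi'$ the two given $\So(N)$ representations inside $\Su(N)$. That proposition yields a special unitary $S$ (called $U$ there) with $\Pi(X) = U \Pi'(X) U^\dagger$ for all $X \in \So(N)$, which is exactly the claim. I would also remark, for the reader's orientation, that this is precisely the ingredient invoked in the proof of Theorem~\ref{thm:cs}: once the candidate eigenvectors $\{V_a\}$ have been shown to generate a subalgebra equivalent to the defining representation $\{T_a\}$, the equivalence can be implemented by a Higgs basis transformation precisely because the relating matrix can be taken in $\mathsf{SU}(N)$.

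I do not expect any genuine obstacle here, since the real work — handling the reducible case via the generalized Schur's Lemma (Lemma~\ref{lemma:genSchursLemma}) and the positive-definiteness of $SS^\dagger$ — has already been carried out in Proposition~\ref{prop:unitary-equiv}. The only point requiring a moment's care is making explicit that "contained in $\Su(N)$" supplies the Hermiticity assumption; one should state this rather than leave it implicit, because without it Proposition~\ref{prop:unitary-equiv} does not apply. A secondary, purely cosmetic point is to note that the "only one copy of each irreducible component" clause is inherited unchanged and is what licenses the use of the generalized (rather than ordinary) Schur's Lemma inside the cited proposition; no additional argument is needed on our part.
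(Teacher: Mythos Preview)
Your proposal is correct and matches the paper's approach exactly: the paper simply states that Proposition~\ref{prop:unitary-equivRedIrred} is ``an important special case of Proposition~\ref{prop:unitary-equiv}'' and gives no further argument. Your observation that ``contained in $\Su(N)$'' supplies the Hermiticity hypothesis is the only substantive point, and you have identified it correctly.
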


Before showing the next Proposition, we recall that the elements of a
adjoint vector $u_a\in \mathbb{R}^{N^2-1}$ are written $u_{ai}\equiv (u_a)_i$. 
\begin{prop}
\label{prop:rotation-invariance}
If an orthonormal set of vectors $\{t_a\}_{a=1}^3$ satisfies
\begin{align}
	\alpha F^{(t_a,t_b)} = \epsilon_{abc} t_c,
\end{align}
 for some number $\alpha$, then so does the rotated set of vectors $\{t_a' = R_{ab}t_b\}_{a=1}^3$ with $R\in \mathsf{SO}(3)$.
\end{prop}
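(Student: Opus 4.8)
The plan is to exploit the fact that the vectors $\{t_a\}$ furnish a three-dimensional representation of $\So(3)$ via $T_a = t_{ai}\lambda_i$, and that an $\mathsf{SO}(3)$ rotation of the triple corresponds to conjugating the defining representation by the $\mathsf{SU}(2)$ element covering that rotation. First I would translate the hypothesis into commutator language using~\eqref{eq:Fprod}: $\alpha F^{(t_a,t_b)} = \epsilon_{abc}t_c$ is equivalent to $\alpha[T_a,T_b] = 2i\epsilon_{abc}T_c$. Rescaling to $\widetilde T_a \equiv \tfrac{2}{\alpha}T_a$ (assuming $\alpha\neq 0$; if $\alpha=0$ the vectors are mutually $F$-orthogonal and the statement is immediate since $R$ is orthogonal), these satisfy the standard $\So(3)\cong\Su(2)$ relations $[\widetilde T_a,\widetilde T_b] = i\epsilon_{abc}\widetilde T_c$.

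The key step is then to observe that the rotated matrices $T_a' \equiv t'_{ai}\lambda_i = R_{ab}t_{bi}\lambda_i = R_{ab}T_b$ satisfy $\alpha[T_a',T_b'] = \alpha R_{ac}R_{bd}[T_c,T_d] = 2i R_{ac}R_{bd}\epsilon_{cde}T_e$, and now I would invoke the defining property of $\mathsf{SO}(3)$, namely $R_{ac}R_{bd}\epsilon_{cde} = \epsilon_{abf}R_{fe}$ (this is the statement that $\epsilon$ is an invariant tensor of $\mathsf{SO}(3)$, equivalently $\det R = 1$ together with $RR^T = I$). Substituting gives $\alpha[T_a',T_b'] = 2i\epsilon_{abf}R_{fe}T_e = 2i\epsilon_{abf}T_f'$, which, translating back via~\eqref{eq:Fprod}, is exactly $\alpha F^{(t_a',t_b')} = \epsilon_{abc}t_c'$. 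Orthonormality of $\{t_a'\}$ is preserved trivially since $R$ is orthogonal and the $t_a$ are orthonormal.

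The main obstacle, such as it is, is purely bookkeeping: making sure the identity $R_{ac}R_{bd}\epsilon_{cde} = \epsilon_{abf}R_{fe}$ is applied with the indices in the right slots, and handling the $\alpha=0$ degenerate case separately (where the claim reduces to the stability of the relation "$F^{(t_a,t_b)}=0$ for all $a,b$" under orthogonal mixing, which follows because $F$ is bilinear and the $t_a'$ are linear combinations of the $t_b$). No representation-theoretic machinery beyond the $F$-product/commutator dictionary~\eqref{eq:Fprod} is actually needed — the statement is a consequence of $\epsilon_{abc}$ being an $\mathsf{SO}(3)$-invariant tensor — so the proof is short.
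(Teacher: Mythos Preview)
Your proof is correct and essentially identical to the paper's: both hinge on the $\mathsf{SO}(3)$ invariance identity $R_{ac}R_{bd}\epsilon_{cde}=\epsilon_{abf}R_{fe}$, with the only cosmetic difference that you route the computation through the commutator dictionary~\eqref{eq:Fprod} while the paper works directly with the $F$-product components. The rescaling and separate $\alpha=0$ discussion are unnecessary (your ``key step'' already works uniformly in $\alpha$, and in fact $\alpha=0$ is vacuous since the hypothesis then forces $\epsilon_{abc}t_c=0$, impossible for orthonormal $t_c$), but they do no harm.
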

\begin{proof}
Consider $\alpha F^{(t_a',t_b')}_k = \alpha f_{ijk}  t_{ai}' t_{bj}' 
= \alpha R_{ad}R_{be}f_{ijk}t_{di} t_{ej}=\alpha R_{ad}R_{be} F^{(t_d,t_e)}_k$ by definition. Now using a Levi-Civita symbol identity $R_{ad}R_{be}R_{cg}\epsilon_{deg} = \text{det}(R)\epsilon_{abc}$, which infers $R_{ad}R_{be} \epsilon_{deg}=\text{det}(R)\epsilon_{abc} R_{cg}$, and the assumption that $R\in \mathsf{SO}(3)$ we get
\begin{align}
\label{eq:Fprod3-rotated}
\alpha F^{(t_a',t_b')} 
&= R_{ad}R_{be} \epsilon_{deg} t_g \nonumber \\
&=  \epsilon_{abc} R_{cg}  t_g \nonumber \\
&=  \epsilon_{abc} t_c' 
\end{align}
\end{proof}
In the case of an improper rotation $R\in \mathsf{O}(3)$, an emerging minus sign in~(\ref{eq:Fprod3-rotated}) from $\text{det}(R)$ can be absorbed into the definition of the vectors $\{t_a'\}$.

The following Proposition shows that the F-product relations characterizing the custodial block $C_N$ for $N=5$ are given by setting all $\lambda_{abcd}$ of the custodial invariants 
$I^{(4)}_{abcd}$ to zero, except for one. And then we easily can decide if a 5HDM matrix $\Lambda$ is custodial-symmetric.
\begin{prop}
\label{prop:5HDM-rotations}
Let $V$ be a manifestly custodial-symmetric 5HDM potential. Then all but one custodial invariants $I^{(4)}_{abcd}$ may be eliminated through a series of orthogonal Higgs basis transformations, while $V$ is preserved in a manifestly custodial-symmetric form.
\end{prop}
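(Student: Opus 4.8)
The plan is to show that the five custodial invariants $I^{(4)}_{abcd}$ available for the 5HDM can be reduced to a single one by acting with the subgroup of Higgs basis transformations that preserves the manifestly custodial-symmetric form of $V$. The key observation is that an $\mathsf{SO}(5)$ rotation of the five doublets acts on the custodial bilinears $\widehat{C}_{ab}=\mathrm{Im}(\Phi_a^\dagger\Phi_b)$ as the antisymmetric (defining $\wedge$ defining) representation, i.e.\ as a two-form on $\mathbb{R}^5$; and the object $I^{(4)}_{abcd}$ is, up to normalization, nothing but the components of the wedge square $\widehat{C}\wedge\widehat{C}$, a four-form on $\mathbb{R}^5$. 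By Hodge duality on $\mathbb{R}^5$, a four-form is equivalent to a one-form, i.e.\ to a vector $w_a \propto \epsilon_{abcde}\widehat{C}_{bc}\widehat{C}_{de}$, and the coefficients $\lambda_{abcd}$ are in bijection with the components of this dual vector $w$. Under an $\mathsf{SO}(5)$ rotation of the doublets, $w$ transforms as an ordinary vector in $\mathbb{R}^5$, so it can always be rotated to point along a single coordinate axis, say $w \propto e_5$; this leaves exactly one nonzero four-form component, which (after relabeling) is $I^{(4)}_{1234}$. That is precisely the statement that all $\lambda_{abcd}$ but $\lambda_{1234}$ can be set to zero.

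The steps, in order, would be: (i) recall from Section~\ref{sec:cs-pot} that the residual Higgs basis freedom preserving the manifestly custodial form~\eqref{eq:manifestCS} includes $\mathsf{SO}(N)\subset \mathsf{SU}(N)$ acting on the custodial-breaking block, and that under such a rotation $\widehat{C}$ transforms as $\Lambda^2(\mathbf{N})$; (ii) identify $V_{\widehat{C}^2}=\lambda_{abcd}I^{(4)}_{abcd}$ with a four-form $\omega = \tfrac12\,\widehat{C}\wedge\widehat{C}$ on $\mathbb{R}^5$, checking that $I^{(4)}_{abcd}$ reproduces exactly the total antisymmetrization $\widehat{C}_{[ab}\widehat{C}_{cd]}$ appearing there; (iii) use the Hodge star on $\mathbb{R}^5$ to write $\omega = \star\, w$ for a unique vector $w\in\mathbb{R}^5$, whose five components are linear in the five independent $\lambda_{abcd}$; (iv) invoke transitivity of $\mathsf{SO}(5)$ on lines in $\mathbb{R}^5$ (equivalently, the standard fact that any vector can be rotated onto a coordinate axis) to find $R\in \mathsf{SO}(5)$ with $R w \parallel e_5$; (v) translate $R$ back into a sequence of real orthogonal Higgs basis transformations on the $\Phi_i$ — since $\mathsf{SO}(5)$ is connected and generated by rotations in coordinate two-planes, each such elementary rotation is an allowed transformation preserving the form of $V$ — and conclude that in the new basis only $I^{(4)}_{1234}$ survives, i.e.\ $C_5 = \alpha D_5^{1234}$.

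I would present step (iii) carefully, since the content of the Proposition really rests on the Hodge-duality identification: the isomorphism $\Lambda^4(\mathbb{R}^5)\cong \Lambda^1(\mathbb{R}^5)$ is what collapses the $\binom{5}{4}=5$-dimensional space of coefficients to an $\mathsf{SO}(5)$-vector, and without the dimension matching $\binom{5}{4}=5=\dim\mathbb{R}^5$ the argument would not go through (this is also why the analogous statement fails for $N\geq 6$, where $\binom{N}{4}>N$). One should also note the harmless sign/normalization issue: an improper rotation in $\mathsf{O}(5)$ flips the sign of $w$, which can be absorbed into $\alpha$ or into a relabeling, exactly as in the remark following Proposition~\ref{prop:rotation-invariance}.

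\textbf{Main obstacle.} The only real subtlety I anticipate is bookkeeping at step (ii) and step (v): verifying that the particular symmetrized combination $I^{(4)}_{abcd}=\widehat{C}_{ab}\widehat{C}_{cd}+\widehat{C}_{ad}\widehat{C}_{bc}+\widehat{C}_{ac}\widehat{C}_{db}$ is genuinely the fully antisymmetric four-form component (so that the $\mathsf{SO}(5)$ action on the $\lambda$'s is exactly the four-form, hence the dual vector, representation, with the correct combinatorial factors), and then checking that the elementary plane rotations used to align $w$ with $e_5$ really do preserve the block form~\eqref{eq:manifestCS} and map $V_{\widehat{C}^2}$ back into the same family of custodial invariants. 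Both are essentially routine once the four-form $\leftrightarrow$ vector dictionary is set up, so the proof should be short.
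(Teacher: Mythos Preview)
Your argument is correct and takes a genuinely different, more conceptual route than the paper's. The paper proceeds by explicit elimination: it applies successive $\mathsf{SO}(2)$ plane rotations (mixing $\Phi_1,\Phi_2$, then $\Phi_2,\Phi_3$, and so on), at each stage killing one $\lambda_{abcd}$ while checking by hand that the remaining invariants are mapped to terms of the same type. The combinatorial fact driving this is that for any pair of indices in $\{1,\dots,5\}$ exactly two of the five $I^{(4)}_{abcd}$ contain precisely one index of the pair, so a plane rotation mixes just those two and leaves the others in the same form. Your Hodge-duality argument bypasses this bookkeeping entirely: once one sees that the coefficient tensor lives in $\Lambda^4(\mathbb{R}^5)\cong\Lambda^1(\mathbb{R}^5)$ and transforms as a vector under $\mathsf{SO}(5)$, the result is immediate from transitivity on directions. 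Your approach also makes transparent both why $N=5$ is special (the dimension match $\binom{5}{4}=5$) and why the surviving coefficient equals $(\sum\lambda_{abcd}^2)^{1/2}$, a fact the paper records separately after its proof. The paper's approach, by contrast, is fully constructive and self-contained, requiring no exterior-algebra machinery, and your step~(v) essentially reproduces it as the concrete realization of ``rotate $w$ onto an axis''.

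Two small points. First, your vector $w$ is introduced as the Hodge dual of the field-dependent four-form $\widehat{C}\wedge\widehat{C}$, but the object you actually rotate is the Hodge dual of the \emph{coefficient} four-form $\lambda$; the exposition would be cleaner if you dualized $\lambda$ directly rather than passing through $w$. Second, the fact you ``recall'' in step~(i), that real orthogonal Higgs rotations preserve the block structure~\eqref{eq:manifestCS}, is not actually stated in Section~\ref{sec:cs-pot}; it is true (a real orthogonal $U$ maps antisymmetric Gell-Mann bilinears to antisymmetric ones and symmetric to symmetric), and the paper supplies this one-line justification inside its own proof, so you should include it rather than cite it.
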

\begin{proof}
 First, the part of $V$ corresponding to the custodial block may be written
\begin{align}
	V_C= \lambda_{1234} I^{(4)}_{1234}+\lambda_{1235} I^{(4)}_{1235}+\lambda_{1245} I^{(4)}_{1245}+\lambda_{1345} I^{(4)}_{1345}+\lambda_{2345} I^{(4)}_{2345}.
\end{align}
   Note that the invariant 
	\begin{align}
 I^{(4)}_{abcd}\equiv  I^{(4)}(\Phi_a,\Phi_b,\Phi_c,\Phi_d) &= \text{Im}(\Phi_a^\dag \Phi_b)\text{Im}(\Phi_c^\dag \Phi_d) 
+\text{Im}(\Phi_a^\dag \Phi_d)\text{Im}(\Phi_b^\dag \Phi_c) \nonumber \\ &+\text{Im}(\Phi_a^\dag \Phi_c)\text{Im}(\Phi_d^\dag \Phi_b),
 \end{align}
	is $\mathbb{R}$-linear in all its variables, in the sense 
\begin{align}	
	I^{(4)}(r_1x_1+r_2x_2,y,z,w) = r_1 I^{(4)}(x_1,y,z,w)+r_2 I^{(4)}(x_2,y,z,w),
\end{align}
for $r_1, r_2 \in \mathbb{R}$, and similarly for the other variables.
Without loss of generality, we will now show how to eliminate all custodial invariants $I^{(4)}$ but $I^{(4)}_{1234}$.
Consider the orthogonal basis change (a mixing of doublet 1 and 2)
\begin{align}\label{E:OrthTrafo5HDM}
\begin{pmatrix}
    \Phi_1 \\
    \Phi_2 \\
\end{pmatrix}   
\to 
\begin{pmatrix}
  \cos \alpha && - \sin \alpha \\
    \sin \alpha && \cos \alpha \\
\end{pmatrix}  
\begin{pmatrix}
    \Phi_1 \\
    \Phi_2 \\
\end{pmatrix},
\end{align}
other fields left invariant. Orthogonal basis changes act block diagonally on $\Lambda_C$,
     and do not mix the custodial block $C_5$ with $A_5$, since bilinears associated with imaginary Gell-Mann matrices are mapped to other bilinears associated with imaginary Gell-Mann matrices, while bilinears associated with real Gell-Mann matrices remain real.
		Then, under the transformation~\eqref{E:OrthTrafo5HDM},
		\begin{align}
			\lambda_{2345} I^{(4)}_{2345} + \lambda_{1345} I^{(4)}_{1345} &\to
			(\lambda_{2345} \cos \alpha - \lambda_{1345} \sin \alpha)I^{(4)}_{2345} \nonumber \\
			&+ (\lambda_{1345} \cos \alpha+\lambda_{2345}\sin \alpha)I^{(4)}_{1345}.
		\end{align}
		Hence we can eliminate one of these custodial invariants, e.g.~$I^{(4)}_{2345}$, by setting
		\begin{align}
		\alpha = \arctan(\frac{\lambda_{2345}}{\lambda_{1345}}).
		\end{align}
		The other terms associated with $C_5$,
	\begin{align}\label{E:restSum}
   		\lambda_{1234} I^{(4)}_{1234} + 	\lambda_{1235} I^{(4)}_{1235} 
			+ \lambda_{1245} I^{(4)}_{1245},
	\end{align}
		are mapped to terms of the \emph{same type} under~\eqref{E:OrthTrafo5HDM}:
		For instance will, when doublets 1 and 2 are mixed by~\eqref{E:OrthTrafo5HDM}, 
		\begin{align}
			\lambda_{1234} I^{(4)}_{1234} \to \lambda_{1234} I^{(4)}_{1+2,1+2,3,4}
		\end{align}
		where index $1+2$ means we have some $\mathbb{R}$-linear combination of 
		$\Phi_1$ and $\Phi_2$ as the corresponding variable of $I^{(4)}(x,y,z,w)$. 
		Then, since $\lambda_{abcd}$ is $\mathbb{R}$-linear in all variables,
		\begin{align}
			\lambda_{1234} I^{(4)}_{1234} \to \lambda_{1234}' I^{(4)}_{1234},
		\end{align}
		for some real number $\lambda_{1234}'$. Here we have used that $I^{(4)}_{abcd}$ is antisymmetric in all indices, which e.g.~infers $I^{(4)}_{1134}=0$. Moreover, each of the terms in the sum 
		\eqref{E:restSum} will be mapped to new terms of the exactly same type under~\eqref{E:OrthTrafo5HDM}, so~\eqref{E:restSum} is preserved in the same form. Hence, we have eliminated $I^{(4)}_{2345}$ from $V$. 
		We may then proceed in the same manner with the surviving custodial invariants of $V$,
		where $V_C$ in the new basis may be written
		\begin{align}\label{E:surviving}
		V_C= \lambda_{1234}' I^{(4)}_{1234}+\lambda_{1235}' I^{(4)}_{1235}+\lambda_{1245}' I^{(4)}_{1245}+\lambda_{1345}' I^{(4)}_{1345}.
		\end{align}
  By letting  
	\begin{align}\label{E:OrthTrafo5HDM_2}
\begin{pmatrix}
    \Phi_2 \\
    \Phi_3 \\
\end{pmatrix}   
\to 
\begin{pmatrix}
  \cos \beta && - \sin \beta \\
    \sin \beta && \cos \beta \\
\end{pmatrix}  
\begin{pmatrix}
    \Phi_2 \\
    \Phi_3 \\
\end{pmatrix},
\end{align}
	the two last terms of~\eqref{E:surviving} are rotated into each other, while the
	other terms are mapped to terms of the exactly same type (i.e.~corresponding to the same $I^{(4)}_{abcd}$). By adjusting the angle $\beta$ to an appropriate
	value, we may eliminate the last term of~\eqref{E:surviving}. We may continue in the same way until only $\lambda_{1234}'' I^{(4)}_{1234}$ is left. 
		\end{proof}
	The value of the surviving parameter in Proposition 
	\ref{prop:5HDM-rotations} will be given by
	\begin{align}
		\lambda_{1234}''=\sqrt{\sum_{a<b<c<d} \lambda_{abcd}^2},
	\end{align}
	since orthogonal basis transformations conserve the eigenvalues of $C_N$.
The procedure of Proposition 
	\ref{prop:5HDM-rotations} only works for $N=5$, since given any distinct pair of indices when N=5, there is always only two $I^{(4)}_{abcd}$ with exactly one of the numbers among their indices. Hence these two invariants will be rotated into each other while the others are left
	in the same form under an $SO(2)$ basis shift. Furthermore, Proposition~\ref{prop:5HDM-rotations} infers that all custodial blocks $C_5$ with the same eigenvalues are equivalent, since they are all equivalent to this simple instance with only one non-zero $\lambda_{abcd}$.

%%%%%%%%%%%%%%%%%%%%%%%%%%%%%%%%%%%%%%%%%%%%%%%%%
\section{Handling large degeneracies}
\label{sec:EigenvalueDegeneraciesBeyondTheCharacteristic}
%%%%%%%%%%%%%%%%%%%%%%%%%%%%%%%%%%%%%%%%%%%%%%%%%
In this Appendix, we will consider eigenvalue degeneracies beyond
the degeneracies which are characteristic of the CS. In cases where such degeneracies exist, one runs into the problem of searching for Lie algebras within a generic vector space i.e. identifying subspaces which are also Lie algebras. While a solution based on Lie algebraic methods, for instance involving root systems, would be most satisfying, the authors are not aware of any theory on this subject, when the ambient vector space $V$ itself is not a Lie algebra. Therefore we propose below a solution based on solving systems of quadratic polynomial equations. Our method relies on transforming the problem into the minimization of a quartic polynomial which may have up to 90 variables. Even with so many variables, the minimization is straightforward with e.g.~\texttt{Scipy}'s~\cite{2020SciPy} optimization module and we manage with a naive implementation to solve the relevant equations even for the most extreme degeneracy patterns in the 5HDM in a couple of minutes on an ordinary desktop computer. It is likely that the computation time can be reduced with more sophisticated optimization code. 

\subsection*{\boldmath $N=3$}
In the 3HDM, when there are more $LM$-orthogonal nullvectors than the three 
that are characteristic for the CS, i.e.~$l\equiv\text{dim}(W_0^{LM})>3$,
then three linear combinations of the basis vectors of $W_0^{LM}$ might generate the defining representation of $\So(3)$, which is necessary and sufficient for CS. To isolate these linear combinations, if they exist, we begin by considering three arbitrary vectors of $W_0^{LM}$
\begin{equation}
\label{eq:general-linear}
v_i = c_{ij} u_j \, , \quad i\in \{1,2,3\}.
\end{equation}
where $\{u_j\}_{j=1}^l$ is any orthonormal basis of $W_0^{LM}$ and $c_{ij}$ are coefficients to be determined. If the vectors~\eqref{eq:general-linear} are to form an orthonormal basis for the defining representation of $\So(3)$, then they must satisfy the following equations
\begin{align}
\label{eq:app3HDMeqs}
g_{ab}(c) &\equiv v_a \cdot v_b -\delta_{ab} = 0 \, , \quad b \leq a \leq 3 \nonumber \\
h_{ab}(c) &\equiv 2F^{(v_a,v_b)} - \epsilon_{abd} v_d = 0 \, , \quad b < a \leq 3.  
\end{align}
This system of 30 equations is to be solved for the 3$l$ coefficients $c_{ij}$, which can be difficult using a direct solving approach or even Gröbner bases~\cite{Cox2015}. We find that the most robust method for finding numerical solutions, if they exist, is to transform the problem into an optimization problem by defining a cost function 
\begin{equation}
J \equiv \sum_{b\leq a \leq 3} g_{ab}^2 + \sum_{b < a \leq 3} h_{ab} \cdot h_{ab} 
\end{equation}
which is to be minimized with respect to the coefficients $c_{ij}$. Solutions to the equations~\eqref{eq:app3HDMeqs} then correspond to minima of the cost function with $J=0$. Conversely, if $J>0$ at its global minimum, then there are no solutions. Such optimization problems are very well studied, and there exist many algorithms to tackle them, which are implemented in readily available computing packages. Large degeneracies in the 4HDM and 5HDM may be treated in the same way, with the appropriate equations.

\subsection*{\boldmath $N=4$}
In the case of the 4HDM with extra degeneracies such that $l^+ \equiv \text{dim}(W^{LM}_{+\alpha}) > 3$ or $l^- \equiv \text{dim}(W^{LM}_{-\alpha}) > 3$, one must, as described in Section~\ref{sec:N4}, look for six orthonormal vectors, three in $W^{LM}_{+\alpha}$ and three in $W^{LM}_{-\alpha}$, generating the defining representation of $\So(4)$. As before, we parametrize these vectors as
\begin{align}
v^\pm_i = c^\pm_{ij} u^\pm_j \, , \quad i\in \{1,2,3\} 
\end{align}
where $\{u^\pm_j\}_{j=1}^{l^\pm}$ are bases for $W^{LM}_{\pm \alpha}$ and $c^\pm_{ij}$ are coefficients to be determined. Now one must find out whether or not the equations
\begin{align}
g^{(\pm)}_{ab}(c^\pm) &\equiv v^\pm_a \cdot v^\pm_b -\delta_{ab} = 0 \, , \quad b \leq a \leq 3 \nonumber \\
h^{(++)}_{ab}(c^+) &\equiv \sqrt{2}F^{(v^+_a,v^+_b)} - \epsilon_{abc} v^+_c = 0 \, , \quad b < a \leq 3 \nonumber \\ 
h^{(--)}_{ab}(c^-) &\equiv \sqrt{2}F^{(v^-_a,v^-_b)} - \epsilon_{abc} v^-_c = 0 \, , \quad b < a \leq 3 \nonumber \\
h^{(+-)}_{ab}(c^\pm) &\equiv \sqrt{2}F^{(v^+_a,v^-_b)} = 0 \, , \quad a,b \leq 3.
\end{align}
have any solutions. Following the same optimization strategy as in the 3HDM to solve what is now a system of 237 quadratic equations with $3(l^++l^-)$ unknowns, the cost function to minimize is
\begin{equation}
J \equiv \sum_{b\leq a \leq 3} \Big(g^{(+)2}_{ab}+g^{(-)2}_{ab}\Big) + \sum_{b < a \leq 3} \Big(h^{(++)}_{ab} \cdot h^{(++)}_{ab} + h^{(--)}_{ab} \cdot h^{(--)}_{ab} \Big) + \sum_{a,b \leq 3} h^{(+-)}_{ab} \cdot h^{(+-)}_{ab}
\end{equation}

\subsection*{\boldmath $N=5$}

For the 5HDM, we may have $l^+ \equiv \text{dim}(W_{+\alpha}^{LM}) > 3$, $l^- \equiv \text{dim}(W_{-\alpha}^{LM}) > 3$ or $l_0 \equiv \text{dim}(W_{0}^{LM}) > 4$, in which case isolating the defining representation of $\So(5)$ is not as straightforward as without excessive degeneracies, cf.~Section~\ref{sec:N5}. Such extra degeneracies are handled similarly as with $N=3$ and $N=4$ doublets, by first writing down a general parametrization of three vectors in $W_{+\alpha}^{LM}$, three vectors of $W_{-\alpha}^{LM}$ and four vectors of $W_{0}^{LM}$
\begin{align}
v^\pm_i &= c^\pm_{ij} u^\pm_j \, , \quad i\in \{1,2,3\} \nonumber \\
v^0_i &= c^0_{ij} u^0_j \, , \quad i\in \{1,\ldots,4\},
\end{align}
where $\{u^\pm_j\}_{j=1}^{l^\pm}$ and  $\{u^0_j\}_{j=1}^{l^0}$ are bases for $W_{\pm\alpha}^{LM}$ and $W_0^{LM}$, and then checking if the coefficients $c^\pm_{ij}$, $c^0_{ij}$ can take values such that the ten vectors above form an orthonormal basis for the defining representation of $\So(5)$. This amounts to solving the equations
\begin{align}
\label{eq:so5eqs}
g^{(\pm)}_{ab}(c^\pm) &\equiv v^\pm_a \cdot v^\pm_b -\delta_{ab} = 0 \, , \quad b \leq a \leq 3 \nonumber \\
g^{(0)}_{ab}(c^0) &\equiv v^0_a \cdot v^0_b -\delta_{ab} = 0 \, , \quad b \leq a \leq 4 \nonumber \\
h_{ab}(c^\pm,c^0) &\equiv \sqrt{2} F^{(v_a,v_b)} - f_{abc}v_c = 0 \, , \quad b < a \leq 10
\end{align}
where in the last equation we have let $\{v_a\}_{a=1}^{10} \equiv \{v_1^+,v_2^+,v_3^+,v_1^-,v_2^-,v_3^-,v_1^0,\ldots,v_4^0\}$ for conciseness and $f_{abc}$ are structure constants of $\So(5)$ such that
\begin{equation}
\label{eq:so4inso5}
f_{abc} = \epsilon_{abc} \, , \quad 1 \leq a,b,c \leq 3 \quad \text{and} \quad 4 \leq a,b,c \leq 6. 
\end{equation}
Any such structure constants will do since the F-products involving the nullvectors $\{v_a^0\}_{a=1}^4$ are unconstrained by CS. One may, for example, choose the structure constants in the orthonormal $\So(5)$ basis given by
\begin{equation}
\bigg\{\frac{\lambda_1-\lambda_8}{\sqrt{2}},\,\frac{\lambda_2+\lambda_6}{\sqrt{2}},\,\frac{\lambda_3-\lambda_5}{\sqrt{2}},\,\frac{\lambda_1+\lambda_8}{\sqrt{2}},\,\frac{-\lambda_2+\lambda_6}{\sqrt{2}},\,\frac{\lambda_3+\lambda_5}{\sqrt{2}},\,\lambda_4,\lambda_7,\lambda_9,\lambda_{10}\bigg\}
\end{equation}
where $\lambda_i$ are the antisymmetric Gell-Mann matrices in 5 dimensions, as given in Section \ref{sec:cs-pot} and in \cite{Solberg:2018aav}. This is a convenient choice since the structure constants in this basis are sparse and satisfy~\eqref{eq:so4inso5}.

Solving the 1102 equations in~\eqref{eq:so5eqs} for the $3(l^++l^-)+4l^0$ coefficients $c^\pm_{ij},c^0_{ij}$ is then done by minimizing the cost function
\begin{equation}
J \equiv \sum_{b\leq a \leq 3} \Big(g^{(+)2}_{ab}+g^{(-)2}_{ab}\Big) + \sum_{b\leq a \leq 4} g^{(0)2}_{ab} + \sum_{b<a\leq 10} h_{ab}\cdot h_{ab}.
\end{equation}
For reference, we solved the most difficult case $l^+=l^-=3$ and $l^0=18$, where $J$ has 90 variables, in a couple of minutes on an ordinary desktop computer, for random and completely generic numerical potentials. It is also worth mentioning that, for a fixed number of variables, the number of equations, although rather impressive, does not significantly increase the difficulty of the optimization problem since the cost function is always a quartic polynomial.

%%%%%%%%%%%%%%%%%%%%%%%%%%%%%%%%%%%%%%%%%%%%%%%%%
\bibliographystyle{JHEP}

\bibliography{ref}

\providecommand{\href}[2]{#2}\begingroup\raggedright\begin{thebibliography}{10}

\bibitem{Sikivie:1980hm}
P.~Sikivie, L.~Susskind, M.~B. Voloshin and V.~I. Zakharov, \emph{{Isospin
  Breaking in Technicolor Models}},
  \href{http://dx.doi.org/10.1016/0550-3213(80)90214-X}{\emph{Nucl. Phys. B}
  {\bf 173} (1980) 189--207}.

\bibitem{Workman:2022ynf}
{\scshape Particle Data Group} collaboration, R.~L. Workman and Others,
  \emph{{Review of Particle Physics}},
  \href{http://dx.doi.org/10.1093/ptep/ptac097}{\emph{PTEP} {\bf 2022} (2022)
  083C01}.

\bibitem{Maniatis2008}
M.~Maniatis, A.~von Manteuffel and O.~Nachtmann, \emph{Cp violation in the
  general two-higgs-doublet model: a geometric view},
  \href{http://dx.doi.org/10.1140/epjc/s10052-008-0712-5}{\emph{The European
  Physical Journal C} {\bf 57} (Oct, 2008) 719--738}.

\bibitem{Ivanov:2018ime}
I.~P. Ivanov, C.~C. Nishi, J.~a.~P. Silva and A.~Trautner,
  \emph{{Basis-invariant conditions for $CP$ symmetry of order four}},
  \href{http://dx.doi.org/10.1103/PhysRevD.99.015039}{\emph{Phys. Rev. D} {\bf
  99} (2019) 015039}, [\href{http://arxiv.org/abs/1810.13396}{{\tt
  1810.13396}}].

\bibitem{deMedeirosVarzielas:2019rrp}
I.~de~Medeiros~Varzielas and I.~P. Ivanov, \emph{{Recognizing symmetries in a
  3HDM in a basis-independent way}},
  \href{http://dx.doi.org/10.1103/PhysRevD.100.015008}{\emph{Phys. Rev. D} {\bf
  100} (2019) 015008}, [\href{http://arxiv.org/abs/1903.11110}{{\tt
  1903.11110}}].

\bibitem{Plantey:2024yfm}
{Plantey, R. and Solberg, M. Aa.}, \emph{{Computable conditions for order-2 CP
  symmetry in NHDM potentials}},
  \href{http://dx.doi.org/10.1007/JHEP05(2024)260}{\emph{JHEP} {\bf 05} (2024)
  260}, [\href{http://arxiv.org/abs/2404.02004}{{\tt 2404.02004}}].

\bibitem{Battye:2011jj}
R.~A. Battye, G.~D. Brawn and A.~Pilaftsis, \emph{{Vacuum Topology of the Two
  Higgs Doublet Model}},
  \href{http://dx.doi.org/10.1007/JHEP08(2011)020}{\emph{JHEP} {\bf 08} (2011)
  020}, [\href{http://arxiv.org/abs/1106.3482}{{\tt 1106.3482}}].

\bibitem{Pilaftsis:2011ed}
A.~Pilaftsis, \emph{{On the Classification of Accidental Symmetries of the Two
  Higgs Doublet Model Potential}},
  \href{http://dx.doi.org/10.1016/j.physletb.2011.11.047}{\emph{Phys. Lett. B}
  {\bf 706} (2012) 465--469}, [\href{http://arxiv.org/abs/1109.3787}{{\tt
  1109.3787}}].

\bibitem{Darvishi:2019dbh}
N.~Darvishi and A.~Pilaftsis, \emph{{Classifying Accidental Symmetries in
  Multi-Higgs Doublet Models}},
  \href{http://dx.doi.org/10.1103/PhysRevD.101.095008}{\emph{Phys. Rev. D} {\bf
  101} (2020) 095008}, [\href{http://arxiv.org/abs/1912.00887}{{\tt
  1912.00887}}].

\bibitem{Gerard:2007kn}
J.~M. Gerard and M.~Herquet, \emph{{A Twisted custodial symmetry in the
  two-Higgs-doublet model}},
  \href{http://dx.doi.org/10.1103/PhysRevLett.98.251802}{\emph{Phys. Rev.
  Lett.} {\bf 98} (2007) 251802},
  [\href{http://arxiv.org/abs/hep-ph/0703051}{{\tt hep-ph/0703051}}].

\bibitem{Nishi:2011gc}
C.~C. Nishi, \emph{{Custodial SO(4) symmetry and CP violation in
  N-Higgs-doublet potentials}},
  \href{http://dx.doi.org/10.1103/PhysRevD.83.095005}{\emph{Phys. Rev. D} {\bf
  83} (2011) 095005}, [\href{http://arxiv.org/abs/1103.0252}{{\tt 1103.0252}}].

\bibitem{Pomarol:1993mu}
A.~Pomarol and R.~Vega, \emph{{Constraints on CP violation in the Higgs sector
  from the rho parameter}},
  \href{http://dx.doi.org/10.1016/0550-3213(94)90611-4}{\emph{Nucl. Phys. B}
  {\bf 413} (1994) 3--15}, [\href{http://arxiv.org/abs/hep-ph/9305272}{{\tt
  hep-ph/9305272}}].

\bibitem{Grzadkowski:2010dj}
B.~Grzadkowski, M.~Maniatis and J.~Wudka, \emph{{The bilinear formalism and the
  custodial symmetry in the two-Higgs-doublet model}},
  \href{http://dx.doi.org/10.1007/JHEP11(2011)030}{\emph{JHEP} {\bf 11} (2011)
  030}, [\href{http://arxiv.org/abs/1011.5228}{{\tt 1011.5228}}].

\bibitem{Haber:2010bw}
H.~E. Haber and D.~O'Neil, \emph{{Basis-independent methods for the
  two-Higgs-doublet model III: The CP-conserving limit, custodial symmetry, and
  the oblique parameters S, T, U}},
  \href{http://dx.doi.org/10.1103/PhysRevD.83.055017}{\emph{Phys. Rev. D} {\bf
  83} (2011) 055017}, [\href{http://arxiv.org/abs/1011.6188}{{\tt 1011.6188}}].

\bibitem{Olaussen:2010aq}
K.~Olaussen, P.~Osland and M.~A. Solberg, \emph{{Symmetry and Mass Degeneration
  in Multi-Higgs-Doublet Models}},
  \href{http://dx.doi.org/10.1007/JHEP07(2011)020}{\emph{JHEP} {\bf 07} (2011)
  020}, [\href{http://arxiv.org/abs/1007.1424}{{\tt 1007.1424}}].

\bibitem{Solberg:2012au}
M.~A. Solberg, \emph{{On the terms violating the custodial symmetry in
  multi-Higgs-doublet models}},
  \href{http://dx.doi.org/10.1088/0954-3899/40/6/065001}{\emph{J. Phys. G} {\bf
  40} (2013) 065001}, [\href{http://arxiv.org/abs/1207.5194}{{\tt 1207.5194}}].

\bibitem{Cen:2018wye}
J.-Y. Cen, J.-H. Chen, X.-G. He and J.-Y. Su, \emph{{Impacts of multi-Higgs on
  the $\rho$ parameter, decays of a neutral Higgs to $WW$ and $ZZ$, and a
  charged Higgs to $WZ$}},
  \href{http://dx.doi.org/10.1142/S0217751X1850152X}{\emph{Int. J. Mod. Phys.
  A} {\bf 33} (2018) 1850152}, [\href{http://arxiv.org/abs/1803.05254}{{\tt
  1803.05254}}].

\bibitem{Solberg:2018aav}
{Solberg, M. Aa}, \emph{{Conditions for the custodial symmetry in
  multi-Higgs-doublet models}},
  \href{http://dx.doi.org/10.1007/JHEP05(2018)163}{\emph{JHEP} {\bf 05} (2018)
  163}, [\href{http://arxiv.org/abs/1801.00519}{{\tt 1801.00519}}].

\bibitem{Weinberg:1976hu}
S.~Weinberg, \emph{{Gauge Theory of CP Violation}},
  \href{http://dx.doi.org/10.1103/PhysRevLett.37.657}{\emph{Phys. Rev. Lett.}
  {\bf 37} (1976) 657}.

\bibitem{Bjorken:1977vt}
J.~D. Bjorken and S.~Weinberg, \emph{{A Mechanism for Nonconservation of Muon
  Number}}, \href{http://dx.doi.org/10.1103/PhysRevLett.38.622}{\emph{Phys.
  Rev. Lett.} {\bf 38} (1977) 622}.

\bibitem{Kawase:2011az}
H.~Kawase, \emph{{Light Neutralino Dark Matter Scenario in Supersymmetric
  four-Higgs Doublet Model}},
  \href{http://dx.doi.org/10.1007/JHEP12(2011)094}{\emph{JHEP} {\bf 12} (2011)
  094}, [\href{http://arxiv.org/abs/1110.3861}{{\tt 1110.3861}}].

\bibitem{Arroyo-Urena:2019lzv}
M.~A. Arroyo-Ure\~na, J.~L. Diaz-Cruz, B.~O. Larios-L\'opez and M.~A.~P.
  de~Le\'on, \emph{{A private SUSY 4HDM with FCNC in the up-sector}},
  \href{http://dx.doi.org/10.1088/1674-1137/abcfae}{\emph{Chin. Phys. C} {\bf
  45} (2021) 023118}, [\href{http://arxiv.org/abs/1901.01304}{{\tt
  1901.01304}}].

\bibitem{Shao:2023oxt}
J.~Shao and I.~P. Ivanov, \emph{{Symmetries for the 4HDM: extensions of cyclic
  groups}}, \href{http://dx.doi.org/10.1007/JHEP10(2023)070}{\emph{JHEP} {\bf
  10} (2023) 070}, [\href{http://arxiv.org/abs/2305.05207}{{\tt 2305.05207}}].

\bibitem{PhysRevD.107.095001}
B.~L. Gon\ifmmode~\mbox{\c{c}}\else \c{c}\fi{}alves, M.~Knauss and M.~Sher,
  \emph{Lepton flavor specific extended higgs model},
  \href{http://dx.doi.org/10.1103/PhysRevD.107.095001}{\emph{Phys. Rev. D} {\bf
  107} (May, 2023) 095001}.

\bibitem{Shao:2024ibu}
J.~Shao, I.~P. Ivanov and M.~Korhonen, \emph{{Symmetries for the 4HDM. II.
  Extensions by rephasing groups}},
  \href{http://arxiv.org/abs/2404.10349}{{\tt 2404.10349}}.

\bibitem{PhysRevD.98.015021}
I.~P. Ivanov and M.~Laletin, \emph{Multi-higgs models with $cp$ symmetries of
  increasingly high order},
  \href{http://dx.doi.org/10.1103/PhysRevD.98.015021}{\emph{Phys. Rev. D} {\bf
  98} (Jul, 2018) 015021}.

\bibitem{Nishi:2006tg}
C.~C. Nishi, \emph{{CP violation conditions in N-Higgs-doublet potentials}},
  \href{http://dx.doi.org/10.1103/PhysRevD.76.119901}{\emph{Phys. Rev. D} {\bf
  74} (2006) 036003}, [\href{http://arxiv.org/abs/hep-ph/0605153}{{\tt
  hep-ph/0605153}}].

\bibitem{Maniatis:2015gma}
M.~Maniatis and O.~Nachtmann, \emph{{Stability and symmetry breaking in the
  general $n$-Higgs-doublet model}},
  \href{http://dx.doi.org/10.1103/PhysRevD.92.075017}{\emph{Phys. Rev. D} {\bf
  92} (2015) 075017}, [\href{http://arxiv.org/abs/1504.01736}{{\tt
  1504.01736}}].

\bibitem{Dynkin:1957um}
E.~B. Dynkin, \emph{{Semisimple subalgebras of semisimple Lie algebras}},
  {\emph{Trans. Am. Math. Soc. Ser. 2} {\bf 6} (1957) 111--244}.

\bibitem{Wang1985OnNH}
M.~Y. Wang and W.~Ziller, \emph{On normal homogeneous Einstein manifolds},
  vol.~18, p.~583.
\newblock 1985.

\bibitem{mckay1981tables}
W.~McKay and J.~Patera, \emph{Tables of Dimensions, Indices and Branching Rules
  for Representations of Simple Lie Algebras}.
\newblock Lecture Notes in Pure and Applied Mathematics Series. New York, 1981.

\bibitem{Feger:2019tvk}
R.~Feger, T.~W. Kephart and R.~J. Saskowski, \emph{{LieART 2.0 \textendash{} A
  Mathematica application for Lie Algebras and Representation Theory}},
  \href{http://dx.doi.org/10.1016/j.cpc.2020.107490}{\emph{Comput. Phys.
  Commun.} {\bf 257} (2020) 107490},
  [\href{http://arxiv.org/abs/1912.10969}{{\tt 1912.10969}}].

\bibitem{zee2016group}
A.~Zee, \emph{Group Theory in a Nutshell for Physicists}.
\newblock In a Nutshell. Princeton University Press, 2016.

\bibitem{Lorente:1972xw}
M.~Lorente and B.~Gruber, \emph{{Classification of semisimple subalgebras of
  simple lie algebras}}, \href{http://dx.doi.org/10.1063/1.1665888}{\emph{J.
  Math. Phys.} {\bf 13} (1972) 1639--1663}.

\bibitem{2020SciPy}
P.~Virtanen, R.~Gommers, T.~E. Oliphant, M.~Haberland, T.~Reddy, D.~Cournapeau
  et~al., \emph{{{SciPy} 1.0: Fundamental Algorithms for Scientific Computing
  in Python}}, \href{http://dx.doi.org/10.1038/s41592-019-0686-2}{\emph{Nature
  Methods} {\bf 17} (2020) 261--272}.

\bibitem{Cox2015}
D.~A. Cox, J.~Little and D.~O'Shea, \emph{Ideals, Varieties, and Algorithms: An
  Introduction to Computational Algebraic Geometry and Commutative Algebra}.
\newblock Springer International Publishing, 2015,
  \href{http://dx.doi.org/10.1007/978-3-319-16721-3}{10.1007/978-3-319-16721-3}.

\end{thebibliography}\endgroup

\end{document}